\documentclass[prx,aps,twocolumn,notitlepage,superscriptaddress,showpacs,showkeys,nofootinbib]{revtex4-2}

\usepackage{hyperref}
\usepackage{graphicx}
\usepackage{amsmath}
\usepackage{amssymb}
\usepackage{mathtools}
\usepackage{braket}
\usepackage{amsthm}
\usepackage{tabularx}
\usepackage{enumitem}

\usepackage{algorithm}
\usepackage{algcompatible}

\usepackage{comment}
\usepackage{array,multirow}

\usepackage{cleveref}
\crefname{algorithm}{Task}{Tasks}

\usepackage{tikz}
\usetikzlibrary{positioning, calc, shapes}

\usepackage{color}

\newtheorem{thm}{Theorem}

\newtheorem{lem}[thm]{Lemma}
\newtheorem{dfn}{Definition}
\newtheorem{remark}{Remark}

\newenvironment{conditions}
  {\par\vspace{\abovedisplayskip}\noindent\begin{tabular}{>{$}l<{$} @{${}:{}$} l}}
  {\end{tabular}\par\vspace{\belowdisplayskip}}

\DeclareMathOperator{\W}{W}
\DeclareMathOperator{\user}{User}
\DeclareMathOperator{\serv}{Serv}
\DeclareMathOperator{\key}{Key}
\DeclareMathOperator{\enc}{Enc}
\DeclareMathOperator{\dec}{Dec}

\begin{document}

\title{Quantum secure blind decryption with two users}
\author{Masahito Hayashi}\email{hmasahito@cuhk.edu.cn}
\affiliation{School of Data Science, The Chinese University of Hong Kong, Shenzhen, Longgang District, Shenzhen, 518172, China}
\affiliation{International Quantum Academy, Futian District, Shenzhen 518048, China}
\affiliation{Graduate School of Mathematics, Nagoya University, Nagoya, 464-8602, Japan}
\author{Yuki Ito}
\affiliation{Graduate School of Mathematics, Nagoya University, Nagoya, 464-8602, Japan}

\begin{abstract}
We propose two types of protocols for quantum secure blind decryption, involving two users and servers.
User 1 holds the encrypted ciphertext.
The servers store several indexed keys including 
the key encrypting the ciphertext.
User 2 aims to obtain the decrypted text.
The protocols are designed to preserve the following types of secrecy:
Users ensure the secrecy of the text from the servers.
Servers maintain the secrecy of the keys from the users.
Our protocols enable User 2 to obtain the decrypted text while preserving these secrecy requirements. 
Additionally, the second protocol ensures the secrecy of the key index to identify the key encrypting the ciphertext from the servers,
and the second protocol requires two non-commuting servers.
Furthermore, we analyze the secrecy of the second protocol under post-attack scenarios, where the two servers communicates with each other after the completion of the protocol.
We show that our quantum protocol satisfies the secrecy under these attacks, whereas its classical counterpart fails to do so.
\end{abstract}

\keywords{secure blind decryption; 
message secrecy; key secrecy; key index secrecy; post attack}

\maketitle

\section{Introduction}
{\it Motivation:} 
A typical advantage of quantum information processing is its ability to provide various information-theoretic secure protocols, such as quantum key distribution and secure quantum computation.
This paper focuses on secure blind decryption, a communication task between users and servers.
%, as introduced in \cite{Chaum83, KY96, M11}.
In this task, the users possess an encrypted ciphertext, % using public-key cryptography, 
while the servers hold 
several indexed keys including the key encrypting the ciphertext.
The objective of secure blind decryption is to allow the users to decrypt their ciphertext through interaction with the servers without revealing its content to the servers. 
Additionally, the protocol ensures that the servers learn no information about the original message
as well as 
the key index to identify the key encrypting the ciphertext,
while one of the users %, starting with only the ciphertext, 
ultimately receives the decrypted message.

The preceding studies \cite{Chaum83, KY96, M11}
discussed a similar task with one user and one server.
Their protocols for this task rely on computational security, meaning their security depends on the computational hardness assumptions underlying public-key cryptography. 
However, such security becomes vulnerable if an efficient decryption algorithm or a quantum algorithm capable of breaking the cryptography is discovered. 
To mitigate this risk, it is necessary to develop a protocol that provides information-theoretic security for this task, particularly when the encryption employs Shannon's one-time pad keys.

In addition, their protocols work with only one key and 
ensure that the user obtains no information about the key
and the server obtains no information about the encrypted ciphertext
nor decrypted text.
But, their protocol does not cover the secrecy of the key index.
The goal of this paper is to propose quantum secure blind decryption protocols to fulfill the above secrecy requirements.
These protocols can be applied, for example, to securely managing wills containing sensitive information. 
In such a scenario, since parents do not wish the contents of their wills to be known,
they encrypt the will and leave the ciphertext with their lawyer, while the decryption key is sent to a trusted institution. 
Upon their passing, the secure decryption protocol ensures that the will can be securely transmitted from the lawyer to their child without revealing its contents to the institution.

{\it Overview of Quantum Secure Blind Decryption:}
The task of quantum secure blind decryption involves two users: $\user_1$, who possesses the ciphertext, and $\user_2$, to whom the decrypted message is intended to be sent via communication with the server. The ciphertext is encrypted using a one-time pad. Unlike public-key cryptography, the server maintains a database of decryption keys, and $\user_1$ holds a key index that specifies which key was used for encryption. The secrecy of the protocol is defined in terms of information-theoretic security.
We introduce two types of quantum secure blind decryption protocols, each with different secrecy requirements, and propose concrete protocols for each protocol.
In addition, we demonstrate that the proposed protocol for the second protocol achieves a performance unattainable by any classical system.

{\it First Result: Protocol for the First Setting:}
The first result introduces a new communication task and a corresponding protocol for the first setting, involving two users ($\user_1$ and $\user_2$) and one server. In this task, the two users are assumed to share entangled states beforehand. The purpose is to securely decrypt the ciphertext held by $\user_1$ and transmit the message to $\user_2$ without revealing the message to the server. While the message remains secret from the server, the user's key index is not.

Our concrete protocol for this setting relies on superdense coding, with the security analysis provided in \cite{DLL03,WLH22}. However, the first setting has potential risks: since the server knows which decryption key was used, it can reconstruct the message if $\user_1$ leaks the ciphertext after the protocol concludes.

{\it Second Result: Protocol for the Second Setting:}
The second result introduces a new task with the second setting, which provides stronger secrecy compared to the first setting. In this case, in addition to keeping the message secret, the protocol ensures that the key index also remains secure against the server. To achieve key index secrecy, the second setting assumes two servers that store identical databases. Unlike the first protocol, the second protocol does not require pre-shared entangled states.

{\it Third Result: Security against post attack:}
We emphasize that the second protocol achieves a level of secrecy unattainable in classical settings. In particular, we assume that the servers may communicate after the protocol concludes to infer the message, a scenario referred to as a post-attack model. 
This is because it is difficult to forbid the servers from communicating with each other after the completion of the protocol.
We prove that our quantum protocol preserves message secrecy under the post-attack model, whereas no classical protocol can achieve this, demonstrating a clear quantum advantage.

In contrast, the task in the first setting can be realized using classical methods. Therefore, our protocol in the second setting highlights the importance of quantum communication. While some may argue that presenting only the second setting would suffice, it is due to the complexity of the second setting we prefer to present the first setting beforehand.

\begin{widetext}
\begin{center}
\begin{table}[t]
\caption{Comparison with existing studies}
\label{T1}
\begin{center}
\begin{tabular}{|c|c|c|c|c|c|c|c|}
\hline
\multicolumn{2}{|c|}{} & \multirow{2}{*}{security} & \multirow{3}{*}{user} & \multirow{2}{*}{message} &\multirow{2}{*}{key}& key-& message  \\
\multicolumn{2}{|c|}{protocol} & \multirow{2}{*}{type} &  & \multirow{2}{*}{secrecy}&\multirow{2}{*}{secrecy} &index &secrecy against  \\
\multicolumn{2}{|c|}{}& &&&&secrecy& post-attack \\
\hline
\multicolumn{2}{|c|}{\multirow{2}{*}{\cite{Chaum83, KY96, M11}}}
& \multirow{2}{*}{CS}& one  & \multirow{2}{*}{Yes} & \multirow{2}{*}{Yes}& \multirow{2}{*}{No}&\multirow{2}{*}{N/A}\\
\multicolumn{2}{|c|}{} &&user&&&& \\
\hline
\multirow{2}{*}{one} &1st &\multirow{2}{*}{ITS} &two & \multirow{2}{*}{Yes} & \multirow{2}{*}{Yes}& \multirow{2}{*}{No} &\multirow{2}{*}{N/A}\\
&code & &users&&&& \\
\cline{2-8}
\multirow{2}{*}{server}&2nd &\multirow{2}{*}{ITS} &two & \multirow{2}{*}{Yes} & \multirow{2}{*}{No} & \multirow{2}{*}{Yes} &\multirow{2}{*}{N/A}\\
&code  & &users&&&& \\
\hline
\multirow{2}{*}{two} &\multirow{2}{*}{quantum}  &\multirow{2}{*}{ITS} &two & \multirow{2}{*}{Yes} & \multirow{2}{*}{Yes} & \multirow{2}{*}{Yes} & \multirow{2}{*}{Yes}\\
&& &users&&&& \\
\cline{2-8}
\multirow{2}{*}{servers}&\multirow{2}{*}{classical}  &\multirow{2}{*}{ITS} &two & \multirow{2}{*}{Yes} & \multirow{2}{*}{Yes} & \multirow{2}{*}{Yes} &\multirow{2}{*}{No}\\
&& &users&&&& \\
\hline
\end{tabular}
\end{center}
\vspace{2ex}
ITS means information-theoretic security.
CS means computational security.
Message secrecy means the message secrecy against the server(s).
Key secrecy means the key secrecy against the user(s).
Key-index secrecy means the key-index secrecy against the server(s).
\end{table}
\end{center}
\end{widetext}

{\it Organization of the Paper:}
The remainder of this paper is organized as follows:
Section \ref{S2} introduces the basic notation used throughout the paper.
Section \ref{S3} defines the first setting's task, quantum one-server protocol, and presents a concrete protocol along with its secrecy analysis.
Section \ref{S4} defines the second setting's task, quantum two-server protocol, and proposes a concrete protocol for this setting, with a discussion of its secrecy.
Section \ref{S5} analyzes secrecy under the post-attack model.
Section \ref{S6} shows that the classical case cannot achieves the performance presented in Section \ref{S6}.
Section \ref{S7} concludes the paper.

\section{Preliminary}\label{S2}
Before stating our protocols, we prepare fundamental knowledge for Bell states.
Let $\{\ket{0}, \ket{1}\}$ be an orthonormal basis of two-dimensional Hilbert space $\mathcal{A}$.
We define the Pauli operators $X, Z$ on $\mathcal{A}$ as
\begin{align}
    X = \ket{1}\!\bra{0} + \ket{0}\!\bra{1}, \quad
    Z = \ket{0}\!\bra{0} - \ket{1}\!\bra{1}.
\end{align}
These operators satisfy the following relation;
\begin{equation}\label{eq:commutative}
    XZ = - ZX.
\end{equation}
The maximally entangled state $\ket{\phi}$ on $\mathcal{A} \otimes \mathcal{A}$ is defined as
\begin{equation}
    \ket{\phi} = \frac{1}{\sqrt{2}}\left( \ket{00}+\ket{11}\right).
\end{equation}
For $a \in \mathbb{F}_2^n$, $i$-th element of $a$ is denoted as $a_i$,
\begin{equation*}
    a = (a_i,a_2,\dots,a_n).
\end{equation*}
We define the sum on $\mathbb{F}_2^n$ by the sum of each element on $\mathbb{F}_2$ as follows.
For $a, b \in \mathbb{F}_2^{n}$,
\begin{equation*}
    a \oplus b \coloneq (a_1 \oplus b_1, a_2 \oplus b_2, \dots a_n \oplus b_n).
\end{equation*}
For $k,j \in \mathbb{F}_2$, The discrete Weyl operator on 
the qubit system
$\mathbb{C}^2$ is defined as
\begin{equation}
    \operatorname{W}(k,j) \coloneq X^k Z^j.
\end{equation}
\if0
This operator satisfies the relations 
\begin{align}
    \W(k,j)^\dagger &= (-1)^{k \cdot j}\W(-k,-j), \\
    \W(k,j)\W(u,v) &= (-1)^{t \cdot u} \W(s \oplus u, t \oplus v).
\end{align}
\fi
For $s \in \mathbb{F}_2^{2n}$, 
the operator $\W_n(s)$ on 
$(\mathbb{C}^2)^{\otimes n}$ is defined by
\begin{equation}
    \operatorname{W}_n(s) \coloneq \W(s_1,s_2) \otimes \dots \otimes \W(s_{2n-1},s_{2n}).
\end{equation}

Next, for $k,j \in \mathbb{F}_2$,
we define the state $\ket{\phi_{kj}}$ on the composite system $\mathbb{C}^2 \otimes \mathbb{C}^2$ as
\begin{align*}
    \ket{\phi_{kj}} 
    &\coloneq \frac{1}{\sqrt{2}}(\ket{k}\!\ket{0} + (-1)^j \ket{k \oplus 1}\!\ket{1})\\
    &= (X^k \otimes Z^j) \ket{\phi} 
    = (\W(k,j) \otimes I)\ket{\phi}.
\end{align*}
\if0
Then, the set $\{\ket{\phi_{00}},\ket{\phi_{01}},\ket{\phi_{10}},\ket{\phi_{11}}\}$ forms 
an orthonormal basis of $\mathcal{A} \otimes \mathcal{A}$.
Further,
for $a,b,c,d,m,n \in \mathbb{F}_2$, the relation
    \begin{align}
&        (\W(a,b) \otimes \W(c,d)) 
\ket{\phi_{kj}}\!\bra{\phi_{kj}} (\W(a,b)^\dagger \otimes \W(c,d)^\dagger)\notag\\
            =& \ket{\phi_{k \oplus a \oplus c, j \oplus b \oplus d}}\!\bra{\phi_{k \oplus a \oplus c, j \oplus b \oplus d}}
    \end{align}
    holds.
\fi
In addition, throughout this paper, 
we use $2$ as the base of the logarithm.

\section{one-server protocol}\label{S3}
This section studies the communication task involving two users, $\user_1$ and $\user_2$, and a server.
The server stores the encryption keys, while $\user_1$ possesses the ciphertext of a message encrypted using one of these keys.
The goal of this task is to securely decrypt the ciphertext held by $\user_1$ and transmit the decrypted message to $\user_2$ without revealing the message to the server during the communication.
In addition, it is required that both users obtain no information for keys.
Notably, we do not impose any security constraints on the key index $K$, meaning the server is allowed to know the key index $K$.
We refer to this task as one-server protocol.

First, we formally define the protocol and its associated security constraints,
which formally clarifies our task.
Next, we present a concrete code for performing this task and analyze its security.

\subsection{Definition of protocol}
We present a formal description of secure blind decryption protocol without key index secrecy.
The communication flow is illustrated in Fig. \ref{fig: SDP without key index secrecy}.
The server stores $f$ keys, $\key_1, \dots, \key_f \in \mathbb{F}_2^{2n}$, which are uniformly and independently distributed.
$\user_1$ holds a key index $K$ and a ciphertext $E$ corresponding to a message $M \in \mathbb{F}_2^{2n}$, encrypted using the key $\key_K$, such that:
\begin{equation} E = M \oplus \key_K. \end{equation}
The task of this setting is the following.

Users $\user_1$ and $\user_2$ have access to $n$ qubit quantum systems, $\mathcal{H}^A_1, \dots, \mathcal{H}^A_n$ and $\mathcal{H}^B_1, \dots, \mathcal{H}^B_n$, respectively.
We define two $n$-qubit quantum systems as follows:
\begin{align} \mathcal{H}^A = \mathcal{H}^A_1 \otimes \dots \otimes \mathcal{H}^A_n, \quad \mathcal{H}^B = \mathcal{H}^B_1 \otimes \dots \otimes \mathcal{H}^B_n. \end{align}

%Users $\user_1$ and $\user_2$ share $n$ maximally entangled states, 
%$\ket{\phi} \in \mathcal{S}(\mathcal{H}^A_i \otimes \mathcal{H}^B_i)$, for $i = 1, \dots, n$.
Although the users do not communicate directly with each other, classical and quantum communications are permitted between the users and the server.
Then, our protocol is given as Protocol \ref{one-server-protocol}.
Detailed procedure of secure decryption protocol without key index secrecy depends on the 4-tuple $(\rho_{prev}, \enc_{user}, \enc_{serv}, \dec)$.
%$\rho_{prev} = \ket{\phi}^1 \otimes \dots \otimes \ket{\phi}^n$, 
We call it a code for the secure blind decryption without key index.

\begin{figure}[t]
    \centering
    \begin{tikzpicture}
        \node[draw,rounded corners](user1) {$\user_1$};
        \node[below right=3cm of user1.south,anchor=text,draw,rounded corners,rectangle split, rectangle split parts=2, align=center](serv) {$\serv$ \nodepart{two} $\key_1$ \\ ... \\$\key_f$};
        \node[above right=3cm of serv.text,draw,rounded corners](user2) {$\user_2$};

        \node[left=of user1](input) {$K, E$}; 
        \node[right=of user2](output) {$M$};
        
        \node[above=2.5cm of serv,draw](ent) {Shared entanglement $\rho_{prev}$};

        \draw[-latex] (user1) -- node[left=0.2]{$K, \mathcal{H}^{A'}$} (serv.text west);
        \draw[-latex] (serv.text east) -- node[right=0.2]{$\mathcal{H}^C$} (user2);

        \draw[double] (user1) -- (ent);
        \draw[double] (user2) -- (ent);

        \draw[dashed,-latex] (input) -- (user1);
        \draw[dashed,-latex] (user2) -- (output);
        
    \end{tikzpicture}
    \caption{Secure decryption protocol without key index secrecy.}
    \label{fig: SDP without key index secrecy}
\end{figure}

\begin{algorithm}[H]
\floatname{algorithm}{Protocol}
\caption{one-server protocol}
\label{one-server-protocol}
    \begin{description}
        \item[$\user_1$'s input] Key index $K \in \{1,\dots,f\}$, ciphertext $E = M \oplus \key_K \in \mathbb{F}_2^{2m}$.
        \item[$\user_2$'s output] Message $M \in \mathbb{F}_2^{2m}$.
    \end{description}

    \begin{enumerate}[label=\textbf{Step \arabic*.}]
        \setlength{\leftskip}{12mm}
        \item Before starting the protocol, $\user_1$ and $\user_2$ share an entangled state $\rho_{prev}$ over $\mathcal{H}^{A}\otimes\mathcal{H}^{B}$.
        \item Given an $n$-qubit quantum system $\mathcal{H}^{A'}$,
        $\user_1$ chooses the TPCP map $\Lambda_{user}$ from $\mathcal{H}^A$ to $\mathcal{H}^{A'}$ depending on the ciphertext $E$.
        This selection rule is denoted by the user encoder $\enc_{user}$;
        \begin{equation}
            \Lambda_{user} \coloneq \enc_{user}(E).
        \end{equation}
        $\user_1$ applies the map $\Lambda_{user}$ to $\mathcal{H}^A$ and sends the quantum system $\mathcal{H}^{A'}$ and the key index $K$ to the server.
    
        \item Given $n$-qubit quantum system $\mathcal{H}^C$, 
         the server chooses a TPCP map $\Lambda_{serv}$ from $\mathcal{H}^{A'}$ received from $\user_1$ to $\mathcal{H}^C$ depending on $K, \key_1,\dots \key_f$.
        This selection rule is denoted by the server encoder $\enc_{serv}$;
        \begin{equation}
            \Lambda_{serv} \coloneq \enc_{serv}(K, \key),
        \end{equation}
        where $\key = (\key_1, \dots, \key_f)$.
        The server applies the TP-CP map $\Lambda_{serv}$ to $\mathcal{H}^{A'}$,
and sends the quantum system $\mathcal{H}^C$ to $\user_2$.
        
        \item $\user_2$ performs the measurement defined by POVM $\dec$ on $\mathcal{H}^C \otimes \mathcal{H}^B$ to obtain the message $M$;
        \begin{equation}
            \dec = \{\Pi_\omega \mid \omega \in \mathbb{F}_2^{2n}\}.
        \end{equation}
    
        The output is the measurement outcome as a result of decryption.
    \end{enumerate}
\end{algorithm}

\subsection{Security conditions}
We consider the following types of security condition.
In the following conditions, we assume that all players make communication only at the time specified by the protocol.
Also, it is assumed that the massage $M$ is uniformly distributed.
\begin{description}[style=nextline]
    \item[Correctness]
    When the server and the users execute the protocol correctly, Protocol \ref{one-server-protocol} ensures that its output is the message $M$.
    In other words, we say that Protocol \ref{one-server-protocol} is correct when the error probability is zero, i.e., 
    \begin{equation}\label{eq:error_prob}
        P_e = \Pr[\omega \neq M] = 0.
    \end{equation}
    \item [Message-and-key secrecy against $\user_1$]
When $\user_1$ obtains no information for the message $M$ nor
the keys $\key$,
we say that Protocol \ref{one-server-protocol} 
satisfies the message-and-key secrecy against $\user_1$.
This condition always holds when
all players make communication only at the time specified by the protocol. Hence, we do not need to discuss this condition.

    \item[Key-and-key-index secrecy against $\user_2$]
    When $\user_2$ recovers the message $M$ correctly, 
$\user_2$ havs no information for the keys $\key$ stored by the server nor the key index $K$ held by $\user_1$.
    We formulate the key secrecy by the state that $\user_2$ received.
    Let $\rho_{user}(m, key,k)$ be a density matrix of the state that $\user_2$ received 
    when the message $M$ is $m$, the key index $K$ is $k$,
     and 
    the keys $\key$ are $key$. 
    We say that Protocol \ref{one-server-protocol} satisfies the 
    key-and-key-index secrecy against $\user_2$
when the condition
    \begin{equation}\label{eq:server_secrecy}
        \rho_{user}(m,key,k) = \rho_{user}(m,key',k') 
         \end{equation}
 holds for any $key, key',  k,  k'$.   

    \item [Message secrecy against the server]
    If $\user_1$ executes the protocol correctly according to Protocol \ref{one-server-protocol}, 
    the server obtains no information for the message $M$ regardless of the server's behavior.
        We formulate this secrecy as follows.
    When the message is $M$ and the key index is $K$, we denote the state the server receives by $\rho_{serv}(M, K)$.
    We say that Protocol \ref{one-server-protocol} satisfies the message secrecy if it satisfies the condition;
    \begin{equation}\label{eq:user_secrecy}
        \rho_{serv}(i,K) = \rho_{serv}(j,K) \quad \forall i,j \in \mathbb{F}_2^{2n}.
    \end{equation}

    \item[Key-index secrecy against the server]
When the users execute the protocol correctly, 
the servers obtain no information about the key index $K$ held by $\user_1$.
    The key index secrecy condition is defined by the independence between the key index $K$ an Queries $Q_1, Q_2$.
    We say that the code has key index secrecy when the condition     \begin{equation}
        I(K;Q_t) = 0 \quad (t = 1,2)
    \end{equation}
holds where $I(X;Y)$ is the mutual information between two random variables $X$ and $Y$.

\end{description}

\subsection{First type of code}\label{section:code_sdp_without_key_secrecy}
%\subsubsection{Code construction}
The first type of code is constructed as follows.
The initial state $\rho_{prev}$,
the user encoder $\enc_{user}$, the server encoder $\enc_{serv}$ and the decoder $\dec$ defined by
\begin{align*}
&m \coloneq n,\\
&\rho_{prev}\coloneq |\phi\rangle \langle \phi|^{\otimes n},\\
    &\enc_{user}(E) \coloneq  W_n(E), \\
    &\enc_{serv}(K, \key) \coloneq W_n(\key_K), \\
    &\dec 
 \coloneq \{\Pi_{\omega_1} \otimes \dots \otimes \Pi_{\omega_n} \mid \omega_i \in \mathbb{F}_2^2 \, (i = 1,\dots n) \},
\end{align*}
where $\{\Pi_{\omega_i}\}$ is a POVM of the basis measurement $\{\ket{\phi_{00}},\ket{\phi_{01}},\ket{\phi_{10}},\ket{\phi_{11}}\}$ on $\mathcal{H}_i^C \otimes \mathcal{H}_i^B$.

\begin{comment}
\begin{conditions}
    \key_1,\dots,\key_f & $\mathbb{F}_2^{2n}$の元でサーバが持つ鍵の一覧. \\
    M & $\mathbb{F}_2^{2n}$の元でオリジナルのメッセージ. \\
    K & $\{1,\dots,f\}$の元で,メッセージ$M$を暗号化した鍵のインデックス. \\
    E=M \oplus \key_K & メッセージ$M$を鍵$\key_K$によって暗号化したときの暗号化メッセージ. \\
    \mathcal{H}^A_i,\mathcal{H}^B_i & user1,user2が持っている1-qubit系.
\end{conditions}
\end{comment}

\begin{comment}
    
\begin{enumerate}[label=\textbf{Step \arabic*.}]
    \setlength{\leftskip}{6mm}
    \item $\user_1, \user_2$ share n two-dimensional maximally entangled states in advance.
    For $1 \le i \le n$, $user1$ applies $\W(E_i,E_{2i})$ to $\mathcal{H}^A_i$, respectively.
    
    \item $\user_1$ sends the quantum system $\mathcal{H}^A = \bigotimes_i \mathcal{H}^A_i$ and the key index $K$ via the quantum and classical communication channel.
    
    \item In this step, the server has the quantum system $\mathcal{H}^A$ and the key index $K$.
    The server prepares a map $\W_n(\key_K)$ from $\mathcal{H}^A$ to $\mathcal{H}^C$.
    Then the server applies the map to $\mathcal{H}^A$.
    Then the system $\mathcal{H}^C$ is sent to $\user_2$.

    \item $\user_2$ perform the basis measurement $\{\ket{\phi_{00}},\ket{\phi_{01}},\ket{\phi_{10}},\ket{\phi_{11}}\}$ on each $n$ composite systems $\mathcal{H}^C_i \otimes \mathcal{H}^B_i$ and obtain the outcome $(a_i,b_i)$, respectively.
    Finally, $\user_2$ obtain the output of the protocol $(a_1,\dots,a_n,b_1,\dots,b_n)$.
\end{enumerate}
\end{comment}

Then, we have the following theorem.
\begin{thm}\label{TH1}
The first type of code
presented in Section \ref{section:code_sdp_without_key_secrecy}
satisfies
the correctness, 
the message-and-key secrecy against $\user_1$,
the key-and-key-index secrecy against $\user_2$,
and the message secrecy against the server.
However, it does not satisfy the key-index secrecy against the server.
\end{thm}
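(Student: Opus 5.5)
The plan is to verify each property directly from the explicit form of the code. Everything reduces to qubit-wise statements about Bell states and Pauli operators, because $\rho_{prev}=|\phi\rangle\langle\phi|^{\otimes n}$ and every operation is a tensor product of discrete Weyl operators.

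\textbf{Correctness.} I will track the state of each pair $\mathcal{H}^C_i\otimes\mathcal{H}^B_i$. After $\user_1$ applies $\W(E_{2i-1},E_{2i})$ and the server applies $\W(\key_{K,2i-1},\key_{K,2i})$, the state is
\[
\bigl(\W(\key_{K,2i-1},\key_{K,2i})\W(E_{2i-1},E_{2i})\otimes I\bigr)\ket{\phi}.
\]
I will use the product rule $\W(a,b)\W(c,d)=\pm \W(a\oplus c,b\oplus d)$, which follows from $XZ=-ZX$ in (\ref{eq:commutative}). Thus, up to a global phase, this state equals $(\W(M_{2i-1},M_{2i})\otimes I)\ket{\phi}=\ket{\phi_{M_{2i-1}M_{2i}}}$, since $E\oplus\key_K=M$. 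The states $\ket{\phi_{kj}}$ form an orthonormal basis, so $\user_2$'s Bell measurement returns $(M_{2i-1},M_{2i})$ with probability one. Hence $P_e=0$.

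\textbf{Message-and-key secrecy against $\user_1$.} This is immediate from the definitions, since $\user_1$ receives nothing.

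\textbf{Key-and-key-index secrecy against $\user_2$.} By the computation above, $\user_2$'s final state on $\mathcal{H}^C\otimes\mathcal{H}^B$ is $\bigotimes_i|\phi_{M_{2i-1}M_{2i}}\rangle\langle\phi_{M_{2i-1}M_{2i}}|$. It depends only on $m$, because the global phases cancel in the density matrix. Hence (\ref{eq:server_secrecy}) holds.

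\textbf{Message secrecy against the server.} The server holds $K$ and the system $\mathcal{H}^{A}$ (i.e.\ $\mathcal{H}^{A'}$). Its reduced state is
\[
\operatorname{Tr}_B\bigl[(\W_n(E)\otimes I)|\phi\rangle\langle\phi|^{\otimes n}(\W_n(E)^\dagger\otimes I)\bigr]=\W_n(E)\,\tfrac{I}{2^n}\,\W_n(E)^\dagger=\tfrac{I}{2^n},
\]
because the marginal of a maximally entangled state is maximally mixed and is invariant under unitaries. So $\rho_{serv}(i,K)=(I/2^n)\otimes|K\rangle\langle K|$ for every $i$. The only point requiring care is the meaning of "regardless of the server's behavior". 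Since $\user_2$ sends nothing back, any deviating operation of the server only processes this same state, which is independent of $M$. No additional information can therefore be extracted.

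\textbf{Failure of key-index secrecy.} $\user_1$ sends $K$ in the clear, so the server's query is $Q=K$ and $I(K;Q)=H(K)>0$ whenever $K$ is not deterministic (e.g., $K$ uniform with $f\ge2$). The main obstacle is not technical. It is being precise about the server's view in the message-secrecy claim and about the sign bookkeeping in the Weyl product. The latter is harmless because it only introduces a global phase.
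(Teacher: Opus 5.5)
Your proof is correct. For correctness, message-and-key secrecy against $\user_1$, message secrecy against the server, and the failure of key-index secrecy, it is essentially the paper's argument: the Weyl product up to a sign gives the Bell state $\ket{\phi_{M_{2i-1}M_{2i}}}$, the server's marginal is $I/2^n$, and $K$ is sent in the clear.

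The genuine difference is in the key-and-key-index secrecy against $\user_2$. You verify it by explicit computation: $\user_2$'s final state is $\bigotimes_i|\phi_{M_{2i-1}M_{2i}}\rangle\langle\phi_{M_{2i-1}M_{2i}}|$, which depends only on $m$. The paper never computes this state. It instead counts dimensions via the chain rule $I(M,K,\key;B,C)=I(M;B,C)+I(K,\key;B,C|M)$. Perfect decoding of a uniform $M$ forces $I(M;B,C)\ge 2n$, while $\dim(\mathcal{H}^B\otimes\mathcal{H}^C)=2^{2n}$ caps the left-hand side at $2n$. Hence $I(K,\key;B,C|M)=0$.

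Your route is more elementary and needs no assumption on the distributions of $M$, $K$, $\key$. It also gives $\rho_{user}(m,key,k)=\rho_{user}(m,key',k')$ pointwise, whereas the entropic argument only gives it on the support of the distribution of $K$.

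The paper's route, by contrast, is code-independent and does not assume $\user_1$ follows the protocol. It uses only that $\user_2$ decodes $M$ perfectly from $2n$ qubits and that the users do not communicate. The paper relies on exactly this robustness in its later Remark contrasting the quantum and classical codes. Your computation covers only honest execution, which suffices for the theorem as stated but does not provide that stronger guarantee.
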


\subsection{Proof of Theorem \ref{TH1}}
Since $\user_1$ sends the key index $K$ to the server,
the key-index secrecy against the server does not hold.
We show other condition as follows.

\subsubsection{Correctness}
We prove that $\user_2$ can obtain the correct message $M$ after the protocol is finished, assuming that the users and the server execute Protocol \ref{one-server-protocol} correctly. 
Let $\rho_{user,i}$ be the state that $\user_2$ receives on $i$-th composite system $\mathcal{H}^A_i \otimes \mathcal{H}^B_i$.
After Step 1, the state on $\mathcal{H}^A_i \otimes \mathcal{H}^B_i$ that $\user_1$ has is written as
\begin{equation}
    (\W(E_{2i-1},E_{2i}) \otimes I) \ket{\phi} = \ket{\phi_{E_{2i-1},E_{2i}}}.
\end{equation}
Since 
%Then, the state on $\mathcal{H}^A_i \otimes \mathcal{H}^B_i$ that $\user_2$ has is
\begin{align}
    &\W(\key_{K,2i-1},\key_{K,2i})\W(E_{2i-1},E_{2i}) \notag\\
    =& (-1)^{\key_{K,2i}\cdot E_{2i-1}}\W(M_{2i-1},M_{2i}),
\end{align}
the state $\rho_{user,i}$ is
\begin{align*}\label{eq:state1}
    \rho_{user,i} %\notag\\
        =& (\W(\key_{K,2i-1},\key_{K,2i}) \otimes I)
        \ket{\phi_{E_{2i-1},E_{2i}}} \!\notag\\ &\bra{\phi_{E_{2i-1},E_{2i}}}(\W(\key_{K,2i-1},\key_{K,2i}) \otimes I)^\dagger \\
        =& \ket{\phi_{M_{2i-1},M_{2i}}}\!\bra{\phi_{M_{2i-1},M_{2i}}}.
    \stepcounter{equation}\tag{\theequation}
\end{align*}
This implies that the measurement outcome of the basis measurement $\{\ket{\phi_{00}},\ket{\phi_{01}},\ket{\phi_{10}},\ket{\phi_{11}}\}$ on $\mathcal{H}^A_i \otimes \mathcal{H}^B_i$ is $(M_i,M_{2i})$ with probability 1, i.e., the relation
\begin{equation}
    \mathrm{Tr}\rho_{user,i} \ket{\phi_{k}}\!\bra{\phi_{k}}
        =
        \begin{cases}
            1 & k = (M_i, M_{2i}), \\
            0 & \textbf{otherwise}
        \end{cases}
\end{equation}
holds for  $k \in \mathbb{F}_2^2$.
By performing similar measurements on each composite system $\mathcal{H}^A_i \otimes \mathcal{H}^B_i$ respectively, $\user_2$ obtains the message $M = (M_1,M_2,\dots,\dots,M_{2n-1},M_{2n})$ with probability 1.

\subsubsection{Key-and-key-index secrecy against $\user_2$}\label{S3D3}
Assume that $\user_2$ recovers the message $M$ correctly.
    Let $\rho_{user}(m, key,k)$ be a density matrix of the state that $\user_2$ received 
    when the message $M$ is $m$, and the key index $K$ is $k$,
    the keys $\key$ are $key$. 
At the beginning of Step 3,
the chain rule of the mutual information guarantees that
    \begin{align}
I(M,K,\key;B,C)=    I(M;B,C)+ I(K,\key;B,C|M).
    \end{align}
    Since  $\user_2$ recovers the message $M$ correctly, 
    $I(M;B,C)\ge 2n$. Since the dimension of $\mathcal{H}^B\otimes \mathcal{H}^C$
   is $2^{2n}$,  $I(M,K,\key;B,C) \le 2n$.
   Thus, $I(K,\key;B,C|M)=0$, which implies
   the state $\rho_{user}(m, key,k)$ does not depend on $key,k$.
This fact shows that any code satisfy key secrecy in this protocol.
Further, $\user_2$ has no information 
for the key index $K$ as well as the keys $\key$.
In this derivation,
we assume only the no-communication condition between the users
for $\user_1$. 
That is, even when $\user_1$ does not follow the protocol with no communication with $\user_2$, the above analysis holds.

\subsubsection{Message secrecy against the server:}
Assuming that the users execute the protocol correctly, we prove that the message $M$ is secure from the server regardless of whether the server runs the protocol correctly.
In step. 3, the server obtains the maximally entangled state on $\mathcal{H}^A_i\,(i=1,\dots,n)$.
That is, the state on each $\mathcal{H}^A_i \, (i=1,\dots,n)$ obtained by the server through the protocol is  
\begin{equation}
    \rho_{serv,i} 
    = \mathrm{Tr}_B \ket{\phi_{E_i,E_{2i}}} \bra{\phi_{E_i,E_{2i}}}
    = \frac{1}{2} I.
\end{equation}
This implies that the entire state $\rho_{serv}(M,K)$ that the server receives is $\rho_{serv}(M, K) = \frac{1}{2^n}I$ and the relation
\begin{equation}
    \rho_{serv}(j,K) = \rho_{serv}(l,K) \quad \forall j,l \in \mathbb{F}_2^{2n}
\end{equation}
holds.
Therefore, we have proved the message secrecy defined by \eqref{eq:user_secrecy}.

\subsection{Second type of code}
The second type of code is constructed as follows.
We construct our code only when $n$ is an integer times of $f$. 
The initial state $\rho_{prev}$,
the user encoder $\enc_{user}$, and the server encoder $\enc_{serv}$ %and the decoder $\dec$ 
defined by
\begin{align*}
&m \coloneq  \frac{n}{f}-1 ,\\
&\rho_{prev}\coloneq |\phi\rangle \langle \phi|^{\otimes n},\\
    &\enc_{user}(E) \\
    &\coloneq 
    id^{\otimes (m+1)(K-1)} \otimes 
    W(1,0)\otimes W_m(E) \otimes id^{\otimes (m+1)(f-K)}, \\
    &\enc_{serv}(\key) \coloneq W_{m+1}(\key_1)\otimes \cdots \otimes, W_{m+1}(\key_f) 
\end{align*}
The decoder $\dec $ is determined as follows.
First, $\user_2$ applies the following measurement.
\begin{align*}
\{
 \Pi_{\omega_1} \otimes \dots \otimes \Pi_{\omega_n} \mid  \omega_i \in \mathbb{F}_2^2 \, (i = 1,\dots n) \},
\end{align*}
where $\{\Pi_{\omega_i}\}$ is a POVM of the basis measurement $\{\ket{\phi_{00}},\ket{\phi_{01}},\ket{\phi_{10}},\ket{\phi_{11}}\}$ on $\mathcal{H}_i^C \otimes \mathcal{H}_i^B$.
$\user_2$ finds an element $k$ such that
$\omega_{(m+1)(k-1)+1}=(1,0) $.
$\user_2$ sets the outcome to be
$\omega_{(m+1)(k-1)+2}, \ldots, \omega_{(m+1)k} $.

From the above construction, we can easily find the following lemma.
\begin{lem}
The second type of code satisfies
the correctness, 
the message-and-key secrecy against $\user_1$,
%the key-and-key-index secrecy against $\user_2$,
and the key-index secrecy against the server.
However, it does not satisfy 
the message secrecy against the server.
\end{lem}

\subsection{No go theorem}
\begin{thm}
No code for one-server protocol
satisfies
the correctness, 
the message secrecy against $\user_1$ and the server, 
the key secrecy against $\user_1$,
the key-and-key-index secrecy against $\user_2$, and
the key-index secrecy against the server.
\end{thm}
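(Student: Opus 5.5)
The plan is to argue by contradiction. We fix an arbitrary code $(\rho_{prev},\enc_{user},\enc_{serv},\dec)$ satisfying all listed conditions, with $f\ge 2$. We allow $\user_1$'s operation and the classical message sent to the server to depend on both $E$ and $K$; this generality is needed because the second type of code uses it. Let $\rho_{A'B}(E,K)$ denote the joint state of the system sent to the server and $\user_2$'s share $\mathcal{H}^B$ after Step 2, together with the classical message. The key-index secrecy against the server makes that classical message independent of $K$. Hence the server's operation is a map $\Lambda_{\key}$ depending only on the keys. $\user_2$ then holds $\tau_{CB}(E,K,\key)=(\Lambda_{\key}\otimes \mathrm{id})(\rho_{A'B}(E,K))$, and correctness requires that $\dec$ applied to $\tau_{CB}$ outputs $E\oplus \key_K$ with probability one.

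First, I would pass to an information-theoretic description. I take $\key$ uniform and $M$ uniform, independent of $K$. Then, for every fixed value of the keys, $E=M\oplus\key_K$ is uniform and independent of $K$. Correctness gives $I(M;CB\,|\,\key)=2n$. Exactly as in Section \ref{S3D3}, the dimension bound $\dim(\mathcal{H}^C\otimes\mathcal{H}^B)=2^{2n}$ yields $I(M,K,\key;CB)\le 2n$. So $CB$ carries precisely the $2n$ bits of $M$ and nothing about $(K,\key)$ beyond $M$, which is the key-and-key-index secrecy against $\user_2$. On the other side, data processing through $\Lambda_{\key}$ gives
\begin{equation}
2n = I(M;CB\,|\,\key)\le I(E,K;A'B)\le 2n .
\end{equation}
The last inequality is the Holevo-type bound. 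It holds because the marginal on $A'$ is independent of $(E,K)$, by the message secrecy together with the key-index secrecy against the server, and the marginal on $B$ is independent of $(E,K)$, because the users do not communicate. Thus the ensemble $\{\rho_{A'B}(E,K)\}$ must saturate the bound. This is the superdense-coding regime, in which the states $\rho_{A'B}(E,K)$ are pure and orthogonal up to an equivalence.

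The key structural step, which I expect to be the main obstacle, is to turn saturation into the following statement. The map $(E,K)\mapsto \rho_{A'B}(E,K)$ factors through a function $g(E,K)$ taking at most $2^{2n}$ values. Concretely, any two inputs give either identical or orthogonal states, and the number of distinct states is at most $2^{2n}$. I would try to prove this via the equality conditions of the Holevo bound, or of strong subadditivity. I would first attempt the direct route: saturation means $S(\sum_x p_x\rho_x)-\sum_x p_x S(\rho_x)=\log\dim$, which forces each $\rho_x$ pure on a support of dimension $2^{2n}$ and forces their average to be maximally mixed. I would then use correctness to show that states which are not orthogonal must decode identically for every choice of keys, and hence must coincide on $CB$.

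Given the factorization, I would finish by counting and then choosing the keys. The pair $(E,K)$ ranges over $2^{2n}f>2^{2n}$ values, so $g$ is not injective. I would aim to find a collision $g(E,K)=g(E',K')$ with $K\neq K'$ by pigeonhole. If the only collisions have $K=K'$, each fixed $K$ already uses all $2^{2n}$ values injectively in $E$, and I would then compare across different $K$ instead. Since $g$ does not depend on $\key$, I may then select $\key_K$ and $\key_{K'}$ freely, for example with $\key_K=0$ and $\key_{K'}=E'\oplus E\oplus u$ for some nonzero $u$. This gives $E\oplus\key_K\neq E'\oplus\key_{K'}$, while $\user_2$ receives the same state $\tau_{CB}$ in both cases. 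That contradicts correctness, which completes the argument.
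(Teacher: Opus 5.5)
Your overall strategy works: a direct dimension argument on the state $\rho_{A'B}(E,K)$ handed to the server, using that $\Lambda_{\key}$ cannot depend on $K$ while the keys can be chosen freely afterwards. However, the step you flag as the main obstacle is set up the wrong way, and the route you sketch for it does not work as stated.

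\begin{itemize}
\item Saturating the Holevo bound does not give the identical-or-orthogonal dichotomy. Take the Bell basis $\{|\phi_{kj}\rangle\}$ for $K=1$ and $\{(U\otimes I)|\phi_{kj}\rangle\}$ for $K=2$, with $U$ not proportional to a Pauli operator. These are pure states with maximally mixed average and marginals, yet some cross-$K$ pairs are neither equal nor orthogonal.
\item ``Decode identically for every choice of keys, hence coincide on $CB$'' is a non sequitur.
\end{itemize}

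What your correctness observation actually gives is stronger, and it suffices on its own. Suppose $\rho_{A'B}(E,K)$ and $\rho_{A'B}(E',K')$ are not orthogonal. Their images under the same map $\Lambda_{\key}\otimes\mathrm{id}$ are then not orthogonal, since fidelity cannot decrease under a channel. Perfect decoding therefore forces $E\oplus\key_K=E'\oplus\key_{K'}$ for \emph{every} $\key$. The key choice in your last paragraph then forces $(E,K)=(E',K')$. So all $f\,2^{2n}$ states are pairwise orthogonal in the $2^{2n}$-dimensional space $\mathcal{H}^{A'}\otimes\mathcal{H}^B$, which is impossible for $f\ge 2$. The Holevo saturation, purity, the factorization through $g$, and the pigeonhole case split can all be dropped.

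One further fix: do the count conditionally on a fixed value $q$ of the classical message. That message is independent of $(E,K)$ only by message secrecy \emph{and} key-index secrecy; you invoke only the latter. If $q$ is kept inside the state, the ambient dimension is no longer $2^{2n}$ and the count breaks.

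The paper argues quite differently. $\user_1$ feeds the all-zero string together with a key index $K'$ chosen by $\user_2$. Then $\user_2$ retrieves $\key_{K'}$, the server learns nothing about $K'$, and $\user_2$ learns nothing else. This is one-server symmetric PIR, excluded by the Mayers--Lo no-go results.

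\begin{itemize}
\item \textbf{The paper's reduction.} It is dimension-free and uses the key secrecy against $\user_2$ essentially. However, it imports an external theorem whose hypotheses (security against a coordinated, possibly cheating user side) do not literally match the honest-user secrecy conditions defined here.
\item \textbf{Your repaired argument.} It is self-contained and needs only correctness, message secrecy, key-index secrecy, and the system sizes. It never uses key secrecy against $\user_2$, which in this regime is automatic anyway (Subsection~\ref{S3D3}).
\item \textbf{The price of your route.} It depends on the message filling all $2n$ bits of an $n$-qubit channel. For a shorter message, say $2m$ bits with $f(m+1)\le n$, the count says nothing. Indeed, a marked-block construction in the spirit of the second type of code then meets correctness, message secrecy and key-index secrecy. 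There the key secrecy against $\user_2$ must enter, for example via the paper's SPIR reduction.
\end{itemize}
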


\begin{proof}
In order to prove this theorem by contradiction,
we assume that there exists 
a code for one-server protocol
that satisfies the above conditions.
We assume that 
$\user_1$ and $\user_2$ agree to make the following modification before the protocol.
$\user_1$ uses $(0,\ldots,0)$ instead of the message $M$.
$\user_2$ choose a value $K' \in \{1, \ldots, f\}$,
and asks $\user_1$ to use $K'$ as the key-index.
Then, after the protocol, $\user_2$ obtains $-\key_{K'}$ as the outcome.
Since all the conditions hold,
the above procedure realizes
symmetric private information retrieval between 
the server and $\user_2$.
However, since such a protocol does not exit \cite{Mayers,Lo},
we obtain the contradiction.
\end{proof}

\section{Two-server protocol}\label{S4}
In the previous section, Protocol \ref{one-server-protocol} permits the server to access the key index $K$ held by $\user_1$.
If the ciphertext $E$ is leaked for any reason, there is a security risk because the server could decrypt the leaked $E$ using the key $\key_K$ and retrieve the message $M$.
To mitigate this risk, we propose the concept of key index secrecy, which prevents a server from obtaining the message from the leaked encrypted ciphertext.
That is, we propose a new communication task,
% where the server cannot gain any information about the key index.
two-server protocol.
It is important to note that this task requires 
the secrecy of the key index in addition to other types of secrecy.

%If the encrypted ciphertext is leaked to the server under the scenario of Section \ref{S3}, the server can recover the message.
%To achieve this, we present our formulation incorporating key index secrecy.

\begin{comment}
** PIRと提案プロトコルの関係，相違点を書く．
** user,keyなどを数学記号として定義する
** 図をいれる
具体的には，2人のユーザuser1,user2と2つのサーバserv1,serv2があるとする．
前セクションと同様に，user1は暗号文$E = M \oplus \key_K$と鍵のインデックス$K$を持っており，serv1,serv2はすべての鍵$\key_1,\dots,\key_f$を管理している．
サーバ同士及びユーザ同士は通信できないが，ユーザとサーバ間では古典通信と量子通信ができるとする．
user1の目的は各サーバに対してクエリ$Q_1,Q_2 \subset \{1,\dots,f\}$と量子状態を送り，user2に復号化したメッセージ$M$を送ることである．
\end{comment}

\subsection{Definition of protocol}\label{section:def_sdp_with_key_secrecy}
The protocol for two-server protocol is defined as follows.
The communication flow and procedure for this protocol are described in Fig \ref{fig: SDP with key index secrecy}.
This protocol involves two users, $\user_1$ and $\user_2$, and two servers, 
$\serv_1$ and $\serv_2$.
The servers, $\serv_1$ and $\serv_2$, store $f$ keys, $\key_1, \dots, \key_f \in \mathbb{F}_2^{2n}$.
These keys are assumed to be uniformly and independently distributed.
$\user_1$ possesses a key index $K$ and a ciphertext $E$ encrypted using the key $\key_K$, such that $E = M \oplus \key_K$.
\if0
Additionally, $\user_1$ has access to $2n$ quantum systems, $\mathcal{H}^A_i, \mathcal{H}^B_i , (i = 1, \dots, n)$, and $n$ maximally entangled states $\ket{\phi}^i \in \mathcal{S}(\mathcal{H}^A_i \otimes \mathcal{H}^B_i) , (i = 1, \dots, n)$.
We introduce the following abbreviations for the composite systems:
$
\mathcal{H}^A \coloneq \mathcal{H}^A_1 \otimes \dots \otimes \mathcal{H}^A_n, \,
\mathcal{H}^B \coloneq \mathcal{H}^B_1 \otimes \dots \otimes \mathcal{H}^B_n
$.
\fi
Then, our protocol is given as Protocol \ref{protocol: SDP with key index secrecy}.

The detail procedure of 
Protocol \ref{protocol: SDP with key index secrecy}
is determined by 
%We define a protocol of two-server protocol as 
a 4-tuple 
$(\enc_{user}, \enc_{serv_1}, \enc_{serv_2}, \dec)$. 
Hence, we call this 4-tuple a code and denote it by $\Phi_{2n}$.

\begin{figure}[t]
    \centering
    \begin{tikzpicture}
        \node[draw,rounded corners](user1) {$\user_1$};
        \node[above right=2cm of user1,anchor=text,draw,rounded corners,rectangle split, rectangle split parts=2, align=center](serv1) {$\serv_1$ \nodepart{two} $\key_1$ \\ $\vdots$ \\$\key_f$};
        \node[below right=2cm of user1,anchor=text,draw,rounded corners,rectangle split, rectangle split parts=2, align=center](serv2) {$\serv_2$ \nodepart{two} $\key_1$ \\ $\vdots$ \\$\key_f$};
        \node[right=5cm of user1,draw,rounded corners](user2) {$\user_2$};

        \node[left=of user1](input) {$K, E$};
        \node[below=of user2](output) {$M$};

        \draw[-latex] (user1) -- node[above left]{$Q_1, \tilde{\mathcal{A}}^1$} (serv1.text west);
        \draw[-latex] (user1) -- node[below left]{$Q_2, \tilde{\mathcal{A}}^2$} (serv2.text west);
        \draw[-latex] (serv1.text east) -- node[above=0.1]{$\mathcal{A}^1$} (user2);
        \draw[-latex] (serv2.text east) -- node[above=0.1]{$\mathcal{A}^2$} (user2);

        \draw[dashed,-latex] (input) -- (user1);
        \draw[dashed,-latex] (user2) -- (output);
    \end{tikzpicture}
    \caption{two-server protocol.}
    \label{fig: SDP with key index secrecy}
\end{figure}

\begin{algorithm}[H]
\floatname{algorithm}{Protocol}
\caption{Quantum two-server protocol}
\label{protocol: SDP with key index secrecy}
    \begin{description}
        \item[$\user_1$'s input] Key index $K \in \{1,\dots,f\}$, ciphertext $E = M \oplus \key_K \in \mathbb{F}_2^{2m}$
        \item[$\user_2$'s output] Message $M \in \mathbb{F}_2^{2m}$
    \end{description}

    \begin{enumerate}[label=\textbf{Step \arabic*.}]
        \setlength{\leftskip}{12mm}
        \item Given $n$-qubit quantum systems $\tilde{\mathcal{A}}^1,\tilde{\mathcal{A}}^2$,
        $\user_1$ chooses a state $\rho_{E,K}$ on the composite system 
        $\tilde{\mathcal{A}}^1\otimes \tilde{\mathcal{A}}^2$,
%        the TPCP map $\Lambda_E$ from $\mathcal{H}^A \otimes \mathcal{H}^B$ to $\tilde{\mathcal{A}_1} \otimes \tilde{\mathcal{A}_2}$ 
        and randomly chooses two queries $Q_1, Q_2$.
These choices depend on the ciphertext $E$ and the key index $K$, and
        the selection rule is denoted by the user encoder $\enc_{user}$;
        \begin{equation}
            (\rho_{E,K}, (Q_1, Q_2)) \coloneq \enc_{user}(E,K,R).
        \end{equation}
    $\user_1$ sets the composite system $\tilde{\mathcal{A}}^1\otimes \tilde{\mathcal{A}}^2$ in the state $\rho_E$, and
sends the system $\tilde{A}^1$ and $\tilde{A}^2$ and queries $Q_1$ and $Q_2$ to $\serv_1$ and $\serv_2$, respectively.
        \item For $t=1,2$, given $n$-qubit quantum systems $\mathcal{A}^t$,
$\serv_t$ chooses the TPCP map $\Lambda'_t$ from $\tilde{\mathcal{A}}^t$ to $\mathcal{A}^t$ depending on $Q_t, \key \coloneq (\key_1,\dots,\key_f)$.
        This selection rule is denoted by the server encoder $\enc_{serv_t}$;
        \begin{equation}
            \Lambda'_t \coloneq \enc_{serv_t}(Q_t, \key).
        \end{equation}
        $\serv_t$ applies the TPCP map $\Lambda'_t$ from $\tilde{\mathcal{A}}^t$ to $\mathcal{A}^t$ depending on $Q_t, \key \coloneq (\key_1,\dots,\key_f)$.
        Then, $\serv_t$ sends the quantum system $\mathcal{A}^t$ to $\user_2$.
        \item $\user_2$ measures the received quantum system $\mathcal{A}^1 \otimes \mathcal{A}^2$ by the decoder $\dec$ defined by the POVM as 
        \begin{equation}
            \dec \coloneq \{\Pi_\omega \mid \omega \in \mathbb{F}_2^{2n}\}.
        \end{equation}
            A resultant of decryption is the measurement outcome.
    \end{enumerate}
    
\end{algorithm}

\subsection{Security condition}\label{S4B}
We consider the following types of security condition.
In the following conditions, we assume that all players make communication only at the time specified by the protocol.
Also, it is assumed that the massage $M$ is uniformly distributed.
The correctness, 
the message-and-key secrecy against $\user_1$,
and the key-and-key-index secrecy against $\user_2$
are defined in the same way as in 
Protocol \ref{one-server-protocol}.
The message secrecy against the servers
and the key-index secrecy against the servers
are defined as follows.

%We impose the following constraints to two-server protocol.

\begin{description}[style=nextline]
    \item[Message secrecy against the servers]
    If the users execute the protocol according to Protocol \ref{protocol: SDP with key index secrecy}, it is required that the servers can not obtain any information about the message $M$ even if they do not run the protocol correctly.
    Let $\rho_{serv_t}(M, K)$ be the state that $t$-th server $serv_t$ received when the message is $M$ and the key index is $K$.
    We say that the code satisfies the message secrecy if the condition
    \begin{equation}
        \rho_{serv_t}(i,K) = \rho_{serv_t}(j,K) \quad \forall i,j \in \mathbb{F}_2^{2n}
    \end{equation}
    holds.
\end{description}

\if0
Also, we define the quantities, the upload cost, the download cost, and the rate of a code $\Phi_{2n} = (\rho_{prev}, \enc_{user}, \enc_{serv}, \dec)$, by
\begin{align}
    U(\Phi_{2n}) &\coloneq \sum_{t=1}^2 \log_2 |Q_t|, \\
    D(\Phi_{2n}) &\coloneq \sum_{t=1}^2 \log_2 \dim \mathcal{A}^t, \\
    R(\Phi_{2n}) &\coloneq \frac{2n}{D(\Phi_{2n})}.
\end{align}
\fi

\begin{description}[style=nextline]
    \item[Key index secrecy against the servers]
    If the users execute the protocol correctly, it is necessary that the servers obtain no information about the key index $K$ held by $\user_1$.
    The key index secrecy condition is defined by the independence between the key index $K$ an Queries $Q_1, Q_2$.
    We say that the code has key index secrecy when the following condition holds;
    \begin{equation}
        I(K;Q_t) = 0 \quad (t = 1,2),
    \end{equation}
    where $I(X;Y)$ is the mutual information between two random variables $X$ and $Y$.
\end{description}

\subsection{Construction of our code}\label{section:code_sdp_with_key_secrecy}
We construct the concrete code for secure blind decryption protocol with key index secrecy.
Our code is inspired by the idea of 
%two servers PIR protocol given in \cite[Section 3]{CGKS98}.
quantum symmetric private information retrieval (QSPIR)
protocol with two servers given in \cite{SH21}.
%We assume that all the quantum systems are $2$-dimensional systems.
First, the state $\rho_{E,K}$ is defined as
\begin{equation}
\rho_{E,K}= \W_n(E)|\phi\rangle \langle \phi|^{\otimes n} \W_n(E)^\dagger.
\end{equation}
Then, let $R$ be a randomly chosen subset of $\{1, \dots, f\}$ and we prepare $Q_1$ and $Q_2$ as follows.
\begin{equation}\label{eq:query}
\begin{aligned}
    Q_1 &= R, \\
    Q_2 &=
      \begin{cases}
        Q_1 \setminus \{K\} & \textbf{if} \quad K \in Q_1 \\
        Q_1 \cup \{K\} & \textbf{otherwise}.
      \end{cases}
\end{aligned}
\end{equation}
The server encoders and the decoder is
\begin{align}
&    \enc_{serv_1}(Q_1, \key) \coloneq W_n(C^A), 
    \quad
    C^A = \bigoplus_{i \in Q_1} \key_i,
    \\
  &  \enc_{serv_2}(Q_2, \key) \coloneq W_n(C^B),
    \quad
    C^B = \bigoplus_{i \in Q_2} \key_i,
    \\
   & \dec \coloneq \{\Pi_{\omega_1} \otimes \dots \otimes \Pi_{\omega_n} \mid \omega_i \in \mathbb{F}_2^2 \, (i = 1,\dots n) \},
\end{align}
where $\{\Pi_{\omega_i}\}$ is a POVM of the basis measurement $\{\ket{\phi_{00}},\ket{\phi_{01}},\ket{\phi_{10}},\ket{\phi_{11}}\}$ on $\mathcal{H}_i^C \otimes \mathcal{H}_i^B$.

Then, we have the following theorem.
\begin{thm}\label{TH4}
The code presented in Section \ref{section:code_sdp_with_key_secrecy}
satisfies
the correctness, 
the message-and-key secrecy against $\user_1$,
the key-and-key-index secrecy against $\user_2$,
the message secrecy against the server,
and the key-index secrecy against the server.
\end{thm}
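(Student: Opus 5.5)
The proof checks each of the five conditions in turn, reusing the proof of Theorem \ref{TH1} wherever possible. The state $\rho_{E,K}$ is $n$ Bell pairs: $\user_1$ applies $\W_n(E)$ to one half of each pair and sends it as $\tilde{\mathcal{A}}^1$, and sends the other halves as $\tilde{\mathcal{A}}^2$.

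\textbf{Correctness.} By construction of $Q_1,Q_2$ the symmetric difference $Q_1\triangle Q_2$ equals $\{K\}$, so
\begin{align*}
C^A\oplus C^B=\key_K .
\end{align*}
$\serv_1$ applies $\W_n(C^A)$ to $\tilde{\mathcal{A}}^1$ and $\serv_2$ applies $\W_n(C^B)$ to $\tilde{\mathcal{A}}^2$. Each pair starts as $\ket{\phi_{E_{2i-1},E_{2i}}}$. I will use the Bell-basis transformation rule: $(\W(a,b)\otimes\W(c,d))$ maps $\ket{\phi_{kj}}\bra{\phi_{kj}}$ to $\ket{\phi_{k\oplus a\oplus c,\,j\oplus b\oplus d}}\bra{\phi_{k\oplus a\oplus c,\,j\oplus b\oplus d}}$. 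This follows from $(I\otimes \W(c,d))\ket{\phi}=\pm(\W(c,d)^T\otimes I)\ket{\phi}$ together with $X^T=X$, $Z^T=Z$ and the anticommutation \eqref{eq:commutative}. Global phases drop out. So the $i$-th pair ends in the Bell state indexed by $E\oplus C^A\oplus C^B=E\oplus\key_K=M$. The Bell measurement then returns $M$ with probability one, exactly as in the correctness part of Theorem \ref{TH1}.

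\textbf{Key-and-key-index secrecy against $\user_2$.} I will repeat the dimension argument of Section \ref{S3D3} verbatim. Correct decoding gives $I(M;\mathcal{A}^1\mathcal{A}^2)= 2n$. The total dimension of $\mathcal{A}^1\otimes\mathcal{A}^2$ is $2^{2n}$, so $I(M,K,\key;\mathcal{A}^1\mathcal{A}^2)\le 2n$. The chain rule then forces $I(K,\key;\mathcal{A}^1\mathcal{A}^2|M)=0$. The same conclusion is also visible directly, since the final state is the pure Bell state indexed by $M$ alone.

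\textbf{Message-and-key secrecy against $\user_1$.} This holds trivially because $\user_1$ receives nothing.

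\textbf{Message secrecy against each server.} Tracing out one half of a Bell state gives $I/2$, and this holds for every $E$. Hence
\begin{align*}
\rho_{serv_t}(M,K)=2^{-n}I\otimes(\text{classical }Q_t)
\end{align*}
for $t=1,2$. The query $Q_t$ depends only on $R$ and $K$, not on $M$. So $\rho_{serv_t}(i,K)=\rho_{serv_t}(j,K)$ for all $i,j$, whatever the server does afterwards.

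\textbf{Key-index secrecy against the servers.} I expect this step to need the most care, because the distribution of $R$ has to be fixed. Take $R$ uniform over all $2^f$ subsets of $\{1,\dots,f\}$. Then $Q_1=R$ is independent of $K$. For $Q_2$, the map $R\mapsto R\triangle\{K\}$ is a bijection on subsets for each fixed $K$, so $Q_2$ is also uniform for every value of $K$. Hence $I(K;Q_1)=I(K;Q_2)=0$.

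Individually the quantum parts are maximally mixed, so they add no information about $K$. The joint state held by both servers is not part of the requirement, because the servers are assumed not to communicate during the protocol.
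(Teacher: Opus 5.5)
Your proposal is correct and follows the paper's proof essentially step for step: Bell-state bookkeeping with $C^A\oplus C^B=\key_K$ for correctness, the dimension and chain-rule argument of Subsection~\ref{S3D3} for secrecy against $\user_2$, maximal mixedness of each half of the Bell pairs for message secrecy, and independence of the queries from $K$ for key-index secrecy. You also derive the Bell-basis transformation rule and take $R$ uniform over all subsets, using the bijection $R\mapsto R\triangle\{K\}$; these make rigorous two points the paper only asserts (it just says the queries are ``constructed randomly'').
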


\subsection{Proof of Theorem \ref{TH4}}
\subsubsection{Correctness}
We prove that $\user_2$ can obtain the desired message $M$ when all the users and servers execute the protocol correctly.
At the step 1, 
$\user_1$ set the initial state on 
the $i$-th two-qubit system
$\tilde{\mathcal{A}}^1_i \otimes \tilde{\mathcal{A}}^2_i$ to be
the state $\ket{\phi_{{E_i},{E_{2i}}}}$.
Then, since the message $M$ has the relation $M = E \oplus \key_K$ and queries satisfy $C^A \oplus C^B = \key_K$, 
the state $\rho_{user,i}$ on the $i$-th two-qubit system
$\mathcal{A}^1_i \otimes \mathcal{A}^2_i$ at the beginning of Step 3 is described as
\begin{align*}\label{eq:state2}
    &\rho_{user,i}\notag\\
        =& (\W(C^A_i,C^A_{2i}) \otimes \W(C^B_i,C^B_{2i})) \notag\\
        &\ket{\phi_{E_i,E_{2i}}} \bra{\phi_{E_i,E_{2i}}} (\W(C^A_i,C^A_{2i})^\dagger \otimes \W(C^B_i,C^B_{2i})^\dagger) \\
        =& \ket{\phi_{E_i \oplus C^A_i \oplus C^B_i, E_{2i} \oplus C^A_{2i} \oplus C^B_{2i}}}\notag \\
        &\bra{\phi_{E_i \oplus C^A_i \oplus C^B_i, E_{2i} \oplus C^A_{2i} \oplus C^B_{2i}}} \\
        =& \ket{\phi_{M_i,M_{2i}}}\!\bra{\phi_{M_i,M_{2i}}}.
    \stepcounter{equation}\tag{\theequation}
\end{align*}
Therefore $\user_2$ obtains the message $M$ with probability 1 by the basis measurement $\{\ket{\phi_{00}},\ket{\phi_{01}},\ket{\phi_{10}},\ket{\phi_{11}}\}$ 
on $\mathcal{A}^1_i \otimes \mathcal{A}^2_i$, respectively.

\subsubsection{Key-and-key-index secrecy against $\user_2$}
Since the dimension of $\user_2$ receives is $2^{2n}$,
in the same way as Subsection \ref{S3D3}, we can show that 
any code of Protocol \ref{protocol: SDP with key index secrecy} satisfies 
the sever secrecy by using the chain rule of quantum mutual information.
 
\subsubsection{Message secrecy against the servers}
Assuming that the users follow the protocol, we prove that the message $M$ is secure against the servers even if they do not execute it correctly.
In Step 3, $t$-th server $\serv_t$ obtains only the system 
$\tilde{\mathcal{A}}^t$ which is one side of the composite system of the maximally entangled state.
Hence, when the message is $M$ and the key index is $K$, 
$\serv_t$ receives the following state $\rho_{serv_t}(M, K)$ in Step 2;
\begin{equation}
    \rho_{serv_t}(M,K) = \frac{1}{2^n} I.
\end{equation}
Therefore, the following relation holds for $\serv_t$ with $t=1,2$;
\begin{equation}
    \rho_{serv_t}(j,K) = \rho_{serv_t}(l,K) \quad \forall j,l \in \mathbb{F}_2^{2n},
\end{equation}
which shows the message secrecy against the servers.

\subsubsection{Key-index secrecy against the servers}
Assuming that the users run the protocol correctly, 
we prove that the servers obtains no information about the key index $K$.
Since the queries $Q_1$ and $Q_2$ are constructed randomly,
%from the set $\{1,\dots,f\}$, 
they are independent from $K$.
That is, the condition
\begin{equation*}
    I(K;Q_1) = I(K;Q_2) = 0
\end{equation*}
holds.
Therefore, we find that the code has the key-index secrecy against the servers.

\section{Secrecy under the post-specious-attack model}\label{S5}
In this section, we highlight the advantage of quantum two-server protocol over its classical counterpart under post-attack scenarios. Specifically, we define the concept of a post-attack and examine the classical version of
Protocol \ref{protocol: SDP with key index secrecy}.
Subsequently, we compare the classical and quantum versions in the context of post-attack situations.

\subsection{Definition of the post-specious-attack model}
In quantum blind decryption with key index secrecy, there are two potential risks that may arise after the completion of the protocol.
The first risk is that a server obtains the encrypted ciphertext.
The second risk is that the servers communicate with each other.
If both risks are realized, the servers can obtain the message because their communication enables them to identify the key index.
However, these risks occur independently, and thus the likelihood of both occurring simultaneously is relatively low.
It is therefore prudent to prepare for scenarios in which only one of these risks occurs.

We have already analyzed the secrecy of the message in the event of the first risk.
This section focuses on the second risk, i.e., the scenario where the two servers may communicate with each other after the protocol's completion if they act dishonestly.
In this case, deviations from the prescribed operations during the protocol could enable the servers to infer the message.
We now consider the scenario where the servers' behavior is specious \cite{DNS10}, which is explained as follows:
The servers' operations during the protocol may deviate from the correct operations, yet the information obtained by the users remains indistinguishable from the case where the servers strictly adhere to the protocol.
When the servers engage in specious behavior during the protocol and subsequently communicate with each other after its completion, we refer to this scenario as a post-specious-attack.
Since the discussion in the previous section does not address this type of attack, it is essential to analyze the message secrecy under post-specious-attacks.
To formalize this concept, we define a post-specious-attack as follows:

\begin{dfn}[Quantum post-specious-attack model]
The operations performed by the servers $\serv_1$ and $\serv_2$ are considered a post-specious-attack if they satisfy the following conditions:
\begin{itemize}
\item[PS1] The servers do not communicate with each other during the protocol.
\item[PS2] The servers communicate with each other after the protocol's completion.
\item[PS3] For $j = 1,2$, the server $\serv_j$ performs a local unitary operation $U_j$ on its local memory system ${\cal H}_{L(j)}$ and the received system ${\cal H}_j$, then sends the system ${\cal H}_j$ to $\user_2$.
Additionally, the initial state on ${\cal H}{L(j)}$ is a pure state $\rho_j$.
\item[PS4] The $\user_2$ correctly obtains the message $M \in \mathbb{F}_2^{2n}$ when both users act honestly. This condition is referred to as the specious condition.
\end{itemize}

We say that the protocol $\Phi_{2n}$ satisfies the message secrecy under the post-specious-attack if the servers $\serv_1$ and $\serv_2$ gain no information about the message $M$ from any post-specious-attack. Mathematically, this implies that the local memory systems ${\cal H}{L(1)}$ and ${\cal H}{L(2)}$ satisfy:
\begin{equation}
I(L(1),L(2),Q_1,Q_2,\key;M) = 0,
\end{equation}
after the protocol's completion for any post-specious-attack, assuming both users are honest.
\end{dfn}

The following points clarify the rationale behind this definition:
\begin{itemize}
\item Conditions PS1 and PS2: During the protocol, the servers are monitored by users and, therefore, cannot communicate with each other. However, after the protocol's completion, user monitoring ceases, making communication between servers plausible.
\item Condition PS3: By choosing the local memory system ${\cal H}{L(j)}$ to be sufficiently large, the local operation can always be expressed as a unitary operation $U_j$, with the initial state on ${\cal H}{L(j)}$ being a pure state $\rho_j$.
\item Condition PS4: To avoid detection, the servers’ attacks are restricted to specious attacks.
\end{itemize}

\subsection{Secrecy discussion}
Under this definition, we establish the following theorem:

\begin{thm}\label{thm: secrecy under post-attack model for quantum}
%The protocol for quantum secure blind decryption presented in Section \ref{section:code_sdp_with_key_secrecy} 
When a code for Protocol \ref{protocol: SDP with key index secrecy} satisfies all requirements given in Section \ref{S4B},
the code satisfies the message secrecy under the post-specious-attack model.
\end{thm}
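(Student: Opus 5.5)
We will prove the statement by a purification/decoupling argument that uses only the information-theoretic consequences of correctness (the specious condition PS4) together with the dimension bound already exploited in Subsection \ref{S3D3}. Fix any post-specious-attack: $\serv_j$ holds a memory ${\cal H}_{L(j)}$ initialized in a pure state $\rho_j$, applies a unitary $U_j$ (which may depend on $Q_j$ and $\key$) to ${\cal H}_{L(j)}\otimes\tilde{\mathcal{A}}^j$, and forwards the output register $\mathcal{A}^j$ to $\user_2$. We condition throughout on the classical values $Q_1,Q_2,\key$ (and hence on $K$ and $E=M\oplus\key_K$). Conditioned on these, the joint state of $L(1)L(2)\mathcal{A}^1\mathcal{A}^2$ after Step 2 is the image of the users' state under the local unitaries.

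The first key step is to show that, conditioned on $(Q_1,Q_2,\key)$, this joint state is pure. The honest $\user_1$ prepares $\rho_{E,K}$ on $\tilde{\mathcal{A}}^1\otimes\tilde{\mathcal{A}}^2$; the memories start pure, and the $U_j$ are unitary. If $\rho_{E,K}$ is mixed, we purify it by a reference system $R$ kept by $\user_1$; $R$ is never touched by the servers. This gives a pure state $\ket{\Psi_{M}}$ on $R\,L(1)L(2)\mathcal{A}^1\mathcal{A}^2$, indexed by $M$ once $(Q_1,Q_2,\key)$ are fixed.

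The second step, which is the main one, uses PS4. Since $\user_2$ decodes $M$ with probability one from $\mathcal{A}^1\mathcal{A}^2$ for uniformly distributed $M$, we have $I(M;\mathcal{A}^1\mathcal{A}^2\,|\,Q_1Q_2\key)\ge 2n$. The dimension of $\mathcal{A}^1\otimes\mathcal{A}^2$ is $2^{2n}$, so by the chain rule and the dimension bound, exactly as in Subsection \ref{S3D3}, we obtain $I(M;\mathcal{A}^1\mathcal{A}^2\,|\,Q_1Q_2\key)=2n$ and $I(M;L(1)L(2)R\,|\,\mathcal{A}^1\mathcal{A}^2 Q_1Q_2\key)=0$. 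Perfect distinguishability also gives that the reduced states on $\mathcal{A}^1\mathcal{A}^2$ for different $M$ are pure and mutually orthogonal: they are $2^{2n}$ orthogonal states in a $2^{2n}$-dimensional space, and a mixed state would occupy more than its share of the dimension. Hence $\ket{\Psi_M}=\ket{\psi_M}_{\mathcal{A}^1\mathcal{A}^2}\otimes\ket{\chi_M}_{RL(1)L(2)}$.

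We then need to show that $\ket{\chi_M}$ does not depend on $M$. Here we expect the main obstacle, and we would attack it with linearity. In our setting (and generally, since $\rho_{E,K}$ is fixed by $\user_1$'s encoder), the state $\user_1$ sends depends on $M$ only through $E$. We would take the first and simplest route: write $\rho_{E,K}$, or its purification, as $V_E\ket{\Phi}$ for a fixed $\ket{\Phi}$ and use the message secrecy against each server. Message secrecy says each server's local input $\tilde{\mathcal{A}}^t$ is independent of $M$; combined with the product structure above, this gives $\mathrm{Tr}_{\mathcal{A}}\ket{\Psi_M}\!\bra{\Psi_M}=\ket{\chi_M}\!\bra{\chi_M}$. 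Taking the partial trace over $R$ and comparing the marginals on $L(j)$, which are independent of $M$ because $L(j)$ only interacted with $\tilde{\mathcal{A}}^j$, then forces the $L(1)L(2)$ marginal to be independent of $M$. If this comparison fails to pin down $\ket{\chi_M}$, we fall back on the uniform-superposition trick: take $M$ in a coherent superposition, use that the map $M\mapsto\ket{\psi_M}\otimes\ket{\chi_M}$ is induced by a fixed isometry, and conclude that $\ket{\chi_M}$ must be constant, as in the no-cloning/information–disturbance argument underlying QSPIR security \cite{SH21}.

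Finally, averaging over $(Q_1,Q_2,\key)$, we get $I(L(1)L(2);M\,|\,Q_1Q_2\key)=0$. Since $Q_1,Q_2$ are independent of $M$ (they depend only on $K$ and the randomness $R$), and $\key$ is independent of $M$ because $M$ is uniform and $\key$ enters only through $E$, we also have $I(Q_1Q_2\key;M)=0$. The chain rule then yields $I(L(1),L(2),Q_1,Q_2,\key;M)=0$.
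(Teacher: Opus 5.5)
Your set-up and the product decomposition $\ket{\Psi_M}=\ket{\psi_M}\otimes\ket{\chi_M}$, which you obtain from joint decodability, are fine. The decisive step, however, is not established: you do not show that the servers' joint memory state is independent of $M$.

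Your main route argues as follows. The reduced states of $\chi_M$ on $L(1)$ and on $L(2)$ are each independent of $M$ (message secrecy), so the reduced state on $L(1)L(2)$ must be too. That inference is false, because single-party marginals do not determine a joint state, even a pure one. Suppose $\chi_M$ were the Bell state $\ket{\phi_{M_1M_2}}$ on two memory qubits. Both marginals would be $I/2$ for every $M$, yet after communicating the servers could read off $(M_1,M_2)$. Per-server independence from $M$ is also exactly what the classical XOR shares $X_1=R_{1,1}$, $X_2=E\oplus R_{1,1}$ of Section~\ref{S6} satisfy, and those are broken by a post-attack.

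Your fallback does not repair this. A fixed isometry with $V\ket{\tau_e}=\ket{\psi_e}\otimes\ket{\chi_e}$ may perfectly well have $\ket{\chi_e}=\ket{e}$, since orthonormal labels can be copied. A no-cloning argument therefore has to use that the attack is local, $U_1\otimes U_2$, and yours never does. (The identity $I(M;L(1)L(2)R|\mathcal{A}^1\mathcal{A}^2)=0$ is true but vacuous, since it conditions on a system that already determines $M$.)

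The missing idea is to apply PS4 to one server at a time, which gives purity of each memory individually. This is the route the paper takes.
\begin{itemize}
\item Set $\Gamma_j(\rho)=\mathrm{Tr}_{L(j)}U_j(\rho\otimes\rho_j)U_j^\dagger$. Then $\mathrm{Tr}\,\Pi_{m'}(\Gamma_1\otimes\Gamma_2)(\tau_e)=\mathrm{Tr}\,(Id\otimes\Gamma_2^*)(\Pi_{m'})(\Gamma_1\otimes Id)(\tau_e)$, and $\{(Id\otimes\Gamma_2^*)(\Pi_{m'})\}_{m'}$ is a POVM on $\mathcal{A}^1\otimes\tilde{\mathcal{A}}^2$.
\item Hence the $2^{2n}$ states $(\Gamma_1\otimes Id)(\tau_e)$ are perfectly distinguishable in a $2^{2n}$-dimensional space, and are therefore pure.
\item Since $U_1(\tau_e\otimes\rho_1)U_1^\dagger$ is pure, the reduced state on $L(1)$ is pure as well. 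It is independent of $e$ because it depends only on $\mathrm{Tr}_{\tilde{\mathcal{A}}^2}\tau_e$.
\item The same holds for $L(2)$. A state on $L(1)L(2)$ with pure marginals $\kappa_1,\kappa_2$ equals $\kappa_1\otimes\kappa_2$, which is independent of $M$.
\end{itemize}

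This argument needs $\tau_e$ itself to be pure. That holds here: for fixed queries and keys, the $2^{2n}$ honest inputs must already be perfectly distinguishable in $2^{2n}$ dimensions. With a nontrivial purifying system $R$, the argument would only give purity of $RL(1)$, not of $L(1)$. So your purification step should be dropped rather than carried along.
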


\begin{proof}
Assume that both users are honest.
Now, we fix the variables 
$Q_1$, $Q_2$, $K$, $\key_{1},\ldots, \key_{f}$
to $q_1$, $q_2$, $k_0$, $k_{1},\ldots, k_{f}$.
%Choosing the local memory system ${\cal H}_{L(j)}$ largely,
%we can assume that $U_j$ is a unitary operation over ${\cal H}_{L(j)}\otimes {\cal H}_j$
%and the initial state $\rho_j$ on ${\cal H}_{L(j)}$ is a pure state,
%where ${\cal H}_j$ is a $2^n$-dimensional system.
Notice that the choice of $U_j$ depends on the $q_j$
and $k_{1},\ldots, k_{f}$ in Condition PS3.
Then, we define the channel $\Gamma_j$ as
\begin{align}
\Gamma_j(\rho):= \mathrm{Tr}_{L(j)} U_j (\rho\otimes \rho_j) U_j^\dagger.
\end{align}
Once $Q_1$, $Q_2$, $\key_{1},\ldots, \key_{f}$ are fixed
to $q_1$, $q_2$, $k_{1},\ldots, k_{f}$,
we denote the initial state with the encrypted text 
$e = m \oplus k_{k_0} \in \mathbb{F}_2^{2n}$
by $\tau_{e}$.
We denote the $\user_2$'s POVM 
by $\{\Pi_m\}_{m \in \mathbb{F}_2^{2n}}$.
Then, the specious condition guarantees that
\begin{align}
\mathrm{Tr} \Pi_{m'} (\Gamma_1\otimes \Gamma_2)(\tau_{m \oplus k_{k_0}})
=\delta_{m',m}
\end{align}
for $m',m \in \mathbb{F}_2^{2n}$.
Hence, we have
\begin{align}
\mathrm{Tr} 
(Id \otimes \Gamma_2^*)(\Pi_{m'}) 
(\Gamma_1\otimes Id)(\tau_{m \oplus k_{k_0}})
=\delta_{m',m}.
\end{align}
Then, 
$\{(\Gamma_1\otimes Id)(\tau_{m \oplus k_{k_0}})\}_{m \in \mathbb{F}_2^{2n}}$
are $2^{2n}$ orthogonal pure states.
Since $\tau_{m \oplus k_{k_0}}$ are maximally entangled states and
$\mathrm{Tr}_{1}(\Gamma_1\otimes Id)(\tau_{m \oplus k_{k_0}})
=\mathrm{Tr}_{1}(\tau_{m \oplus k_{k_0}})$,
$(\Gamma_1\otimes Id)(\tau_{m \oplus k_{k_0}})$
are also maximally entangled states.
Since the state 
$U_1 (\tau_{m \oplus k_{k_0}}\otimes \rho_1) U_1^\dagger$
is a pure state, 
the entropy of 
$(\Gamma_1\otimes Id)(\tau_{m \oplus k_{k_0}})
=\mathrm{Tr}_{L(1)}U_1 (\tau_{m \oplus k_{k_0}}\otimes \rho_1) U_1^\dagger$
equals 
the entropy of 
$\mathrm{Tr}_{1,2}U_1 (\tau_{m \oplus k_{k_0}}\otimes \rho_1) U_1^\dagger$.
Since $(\Gamma_1\otimes Id)(\tau_{m \oplus k_{k_0}})$ is a pure state,
$\mathrm{Tr}_{1,2}U_1 (\tau_{m \oplus k_{k_0}}\otimes \rho_1) U_1^\dagger$ is a pure state.
Further, the state 
$\mathrm{Tr}_{2} \tau_{m \oplus k_{k_0}}$ that 
$\serv_1$ receives does not depend on $m$.
Hence, the pure state
$\mathrm{Tr}_{1,2}U_1 (\tau_{m \oplus k_{k_0}}\otimes \rho_1) U_1^\dagger$ does not depend on $m$.
We denote it by $\kappa_1$.

We make the same discussion by exchanging the roles of 
$\serv_1$ and $\serv_2$.
Then, we find that 
$\mathrm{Tr}_{1,2}U_2 (\tau_{m \oplus k_{k_0}}\otimes \rho_2) U_2^\dagger$ is a pure state 
$\kappa_2$ that does not depend on $m$.
Therefore, after the completion of the protocol,
the state on the composite system 
${\cal H}_{L(1)}\otimes{\cal H}_{L(2)} $
is $\kappa_1\otimes\kappa_2$.
Therefore, the specious condition guarantees that
no post-specious-attack obtains the information for the message 
$M$ 
when both users are honest.
\end{proof}

\section{Classical version of secure blind decryption}\label{S6}
%and post-attack model}
\subsection{Classical version without key index secrecy}\label{S60}
The classical version of one-server protocol
can be implemented by using Shannon's one time pad key $R$ shared 
between two users as follows.
$\user_1$ sends the modulo sum $X=R \oplus E$
of the random number $R$ and the encrypted ciphertext $E= \key_K \oplus M$ and 
and the key index $K$ to the server.
The server calculates $Y= \key_K\oplus X$ and sends it to $\user_2$.
$\user_2$ obtains the message by $R \oplus Y=R \oplus \key_K\oplus R \oplus \key_K \oplus M=M $.

The correctness and the message secrecy are trivial.
The key secrecy can be shown as follows.
Assume that $\user_1$ and the server honest.
$\user_2$'s information is $M$ and $R$.
Since $R$ is independent of $\key$, the key secrecy holds.
However, the classical version with key index secrecy
is more complicated.

\subsection{Definition of classical protocol with key index secrecy}\label{S6A}
When the users and servers can only use classical computation and communication in the secure blind decryption protocol, 
we say that the protocol is classical. 
The formal definition of the classical secure blind decryption protocol with key index secrecy is shown below.  
$\user_1$ has the classical system $A,B = \mathbb{F}_2^n$ and the servers $\serv_1$ and $\serv_2$ possess the classical system $A',B' = \mathbb{F}_2^n$, respectively. 
The servers and the users are allowed to communicate with each other, but the servers cannot communicate with each other. 
A protocol for classical two-server protocol is defined as 
Protocol \ref{Protocol3}.

\begin{algorithm}[H]
\floatname{algorithm}{Protocol}
\caption{Classical version of secure decryption protocol with key index secrecy}
\label{Protocol3}
    \begin{description}
            \item[$\user_1$'s input] Key index $K \in \{1,\dots,f\}$, ciphertext $E = M \oplus \key_K \in A$
            \item[$\user_2$'s output] Message $M \in B$
    \end{description}
    \begin{enumerate}[label=\textbf{Step \arabic*.}]
        \setlength{\leftskip}{12mm}
    \item $\user_1$ generates random number $R_1$.
        Then, $\user_1$ prepares two strings $x_1,x_2 \in A$ based on $R_1$ and $E$.
        Also, $\user_1$ constructs two queries $Q_1,Q_2$ based on the key index $K$, ciphertext $E$, and $R_1$.
        All the operations are described by the user encoder $\enc_{\user}$:
        \begin{equation}
            (X_1, X_2, Q_1, Q_2) \coloneq \enc_{\user} (E,K,R_1).
        \end{equation}
        After preparation, $\user_1$ sends the classical systems $A$ and $B$ (that is $X_1,Q_1$ and $X_2,Q_2$) to the servers $\serv_1$ and $\serv_2$, respectively.
    \item The first(second) server applies the function $g_1 : A \to A$($g_2 : B \to B$) to received $X_t$.
        The function is constructed from the query $Q_t$ and $\key \coloneq (\key_1,\dots,\key_f)$, which is defined by server encoder $\enc_{\serv}$:
        \begin{equation}
            g_t \coloneq \enc_{\serv_t} (Q_t,\key).
        \end{equation}
        The $\serv_t$ sends the response $g_t(X_t)$ on the classical system $A$ to the $\user_2$.
    \item  The $\user_2$ reconstruct the message $M$ from the $g_1(X_1),g_2(X_2)$.
        This operation is written by the decoder $\dec$:
        \begin{equation}
            \dec(g_1(x_1), g_2(x_2)) = y.
        \end{equation}
        The result of the protocol is the decoder's output $y$.
    \end{enumerate}
\end{algorithm}

For this protocol, we define the correctness, 
the message secrecy, the sever secrecy, and the key index secrecy
in the same way as Protocol \ref{protocol: SDP with key index secrecy}.
The concrete form of Protocol \ref{Protocol3}
is determined by a $4$-tuple $(\enc_{\user},\enc_{\serv_1},\enc_{\serv_2}, \dec)$, which is called a code $\Phi_{n,c}$.

\subsection{Construction of our code with two servers}\label{section:code}
We consider the following code 
a $4$-tuple $(\enc_{\user},\enc_{\serv_1},\enc_{\serv_2}, \dec)$.
Assume that 
$R_1$ is composed of uniform random numbers
$R_{1,1}\in \mathbb{F}_2^n$ and $R_{1,2}\in \mathbb{F}_2^f$.
We define $(X_1, X_2, Q_1, Q_2) \coloneq \enc_{\user} (E,K,R_1)$ as
\begin{align}
X_1:= R_{1,1}, \quad X_2:= E\oplus  R_{1,1}, \quad Q_1:= R_{1,2}
\end{align}
and
\begin{align}
Q_{2,j}:=
\left\{
\begin{array}{ll}
 Q_{1,j} \oplus 1 & \hbox{ when } j=K \\
 Q_{1,j}  & \hbox{ when } j \neq K .
\end{array}
\right.
\end{align}
Then, we define
$g_t \coloneq \enc_{\serv_t} (Q_t,Key)$ 
and $\dec$ as
        \begin{align}
g_t(X_t) &:=\Big(\bigoplus_{j=1}^f Q_{t,j} \key_j\Big)\oplus  X_t \\
\dec(X_1', X_2')&:=X_1'\oplus X_2'.
        \end{align}

\subsection{Analysis of our code with key index secrecy}
We show the following theorem in this section.
\begin{thm}
The code presented in Section \ref{section:code}
satisfies the correctness, the message secrecy, the key secrecy
and the key index secrecy.
\end{thm}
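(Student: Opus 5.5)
We verify the four properties directly from the explicit formulas of the code in Section \ref{section:code}, treating each in turn and using only that $R_{1,1}$ and $R_{1,2}$ are uniform, independent of each other and of $(M,K,\key)$.

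\emph{Correctness.} By construction $Q_1\oplus Q_2$ is the indicator vector $e_K$ of the key index. Hence
\begin{align*}
g_1(X_1)\oplus g_2(X_2)
=\Big(\bigoplus_{j=1}^f (Q_{1,j}\oplus Q_{2,j})\key_j\Big)\oplus R_{1,1}\oplus E\oplus R_{1,1}
=\key_K\oplus E=M .
\end{align*}
So the decoder $\dec$ outputs $M$ with probability one.

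\emph{Message secrecy and key-index secrecy.} These are statements about a single server's view $(X_t,Q_t)$.

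For $\serv_1$, the view is $(R_{1,1},R_{1,2})$. It is uniform and independent of $(M,K)$ since it consists only of the user's fresh randomness.

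For $\serv_2$, the view is $(E\oplus R_{1,1},\,R_{1,2}\oplus e_K)$. Conditioned on any fixed $(M,K,\key)$, the first component is a one-time pad with uniform $R_{1,1}$, and the second is a one-time pad of $e_K$ with uniform $R_{1,2}$. The two pads are independent, so the pair is uniform on $\mathbb{F}_2^n\times\mathbb{F}_2^f$ whatever $(M,K)$ is.

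This gives the equality of the distributions $\rho_{serv_t}(i,K)=\rho_{serv_t}(j,K)$, and also $I(K;Q_t)=0$ for $t=1,2$. In fact it gives the stronger statement that each server's whole view is independent of $(M,K)$ jointly. Since a server's later deviation can only post-process this view, message secrecy holds regardless of the server's behaviour.

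\emph{Key(-and-key-index) secrecy against $\user_2$.} This is the step needing the most care. The chain-rule dimension argument of Subsection \ref{S3D3} does not transfer automatically, because $\user_2$ here receives $2n$ classical bits while $M$ has only $n$ bits. I would therefore compute $\user_2$'s view directly, with honest $\user_1$ and servers. Write $S_t=\bigoplus_j Q_{t,j}\key_j$. The view is the pair
\begin{align*}
(Y_1,Y_2)=(S_1\oplus R_{1,1},\ S_2\oplus E\oplus R_{1,1}).
\end{align*}
Conditioned on $(M,K,\key,R_{1,2})$, $Y_1$ is uniform by the pad $R_{1,1}$, and $Y_2=Y_1\oplus M$ is then determined. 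Thus the joint distribution of $(Y_1,Y_2)$ given $(M,K,\key)$ is uniform over pairs with $Y_1\oplus Y_2=M$, independent of $\key$ and $K$, which is exactly \eqref{eq:server_secrecy}.

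I expect the only subtle point to be justifying that we may condition on $R_{1,2}$ and still get a uniform $Y_1$; this is immediate from the independence of $R_{1,1}$ from $R_{1,2}$ and $\key$. Message-and-key secrecy against $\user_1$ is trivial as before, since $\user_1$ receives nothing.
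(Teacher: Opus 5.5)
Your proposal is correct and takes the same route as the paper's proof. Correctness follows from the cancellation $Q_1\oplus Q_2=e_K$ and $R_{1,1}\oplus R_{1,1}=0$; secrecy against each server holds because its view $(X_t,Q_t)$ is one-time-padded by $R_{1,1}$ and $R_{1,2}$; and key secrecy against $\user_2$ holds because its view reduces to $M$ together with the $R_{1,1}$-masked value $g_1(X_1)$. Your version is more careful than the paper's: you analyze the joint view $(Y_1,Y_2)$, you include independence from $K$, and you correctly note that the dimension argument of Subsection \ref{S3D3} does not carry over to this classical setting. These are refinements of the same argument, not a different approach.
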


\begin{proof}
When $\user_1,\user_2$ and $\serv_1$, $\serv_2$ are honest,
we have
\begin{align}
&\dec(g_1(X_1), g_2(X_2))\notag\\
=&
\Big(\Big(\bigoplus_{j=1}^f Q_{t,1} \key_j\Big)\oplus  X_1 
\Big)
\oplus
\Big(\Big(\bigoplus_{j=1}^f Q_{t,2} \key_j\Big)\oplus  X_2
\Big) \notag\\
=&
(\bigoplus_{j=1}^f (Q_{t,1}\oplus Q_{t,2}) \key_j\Big)\oplus  
(X_1\oplus  X_2)\notag\\
=&
 \key_K \oplus  E= M,
\end{align}
which shows the correctness.

Since $Q_t$ and $X_t$ are independent of $M$ and $\key$ when $\user_1$ is honest, the message secrecy and the key index secrecy are preserved.

When $\user_1$, $\serv_1$, and $\serv_2$ are honest, 
$\user_2$'s information consists of $g_1(X_1) , g_2(X_2) $.
That is, $\user_2$'s information is 
$f_1(X_1) \oplus f_2(X_2)=M$ and $f_1(X_1)$.
Since $R_{1,1}$ is an independent uniform random number,
$f_1(X_1)=\Big(\bigoplus_{j=1}^f Q_{2,j} \key_j\Big)\oplus  R_{1,1}
$ are independent of $\key$.
Hence, the key secrecy is maintained.
\end{proof}

\begin{remark}
The above derivation of the key secrecy
assume that $\user_1$ is honest.
If $\user_1$ behaves as follows,
$\user_2$ obtain $\key_K$ as follows.
Assume that
$\user_1$ fixes $R_{1,1}$ to be $0$, and 
decides $R_{1,2}$ as follows.
 \begin{align}
R_{1,2,j}:=
\left\{
\begin{array}{ll}
 1 & \hbox{ when } j=K \\
0  & \hbox{ when } j \neq K .
\end{array}
\right.
\end{align}
Then, $f_1(X_1)=\key_K$.
$\user_2$ obtains $\key_K$ as well as $M$.
 
However, as discussed in Subsection \ref{S3D3},
in the quantum case, the derivation of the key secrecy
assumes only the non-existence of the communication between the users.
Hence, quantum protocol for two-server protocol
guarantees the key secrecy with a weaker assumption than the above derivation.
It is not clear
whether there exists a code in the above classical protocol
such that 
the key secrecy only with 
the no-communication condition between users
holds in addition to the correctness, 
the message secrecy, and the key index secrecy.
\end{remark}

\subsection{Classical version of post-specious-attack model}
Next, we define the classical version of the post-specious-attack model. 
%Its definition is based on prior research \cite[B.1 Definition B.1]{DNS10}.
%In contrast to the prior research, which assumes a secure channel connecting the two servers, our task prohibits server communication during the protocol.
We model the classical post-attack as follows, using the same symbols as in 
Protocol 2:

\begin{dfn}[Classical post-attack model]\label{DF3}
The operations performed by the servers $\serv_1$ and $\serv_2$ are considered a post-attack if they satisfy the following conditions:
\begin{itemize}
\item[CP1] The servers follow the correct protocol during its execution.
\item[CP2] The servers communicate with each other after the protocol's completion.
\end{itemize}
We say that the protocol $\Phi_{n,c}$ maintains the message secrecy against post-attack if the servers $\serv_1$ and $\serv_2$ obtain no information about the message $M$ from any post-attack. Mathematically, this is expressed as:
\begin{equation}
I(X_1, X_2, Q_1, Q_2, \key; M) = 0.
\end{equation}
\end{dfn}

Under the above definition, we establish the following lemma:

\begin{lem}\label{lem: secrecy under the post-attack model for classical}
Any code $\Phi_{n,c} = (\enc_{\user}, \enc_{\serv_1},\enc_{\serv_2}, \dec)$ fails to satisfy the message secrecy under the post-attack model.
\end{lem}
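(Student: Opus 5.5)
The plan is to show that, in any classical code, the pooled data $(X_1,X_2,Q_1,Q_2,\key)$ already determines what $\user_2$ receives, and therefore determines $M$ by correctness. We do this by a direct simulation argument. Lemma \ref{lem: secrecy under the post-attack model for classical} is read as applying to codes satisfying correctness, which is the only requirement we use. Under CP1 each server computes its answer honestly. Since the servers are classical, they can keep copies of everything they received: $\serv_1$ retains $(X_1,Q_1)$ and $\serv_2$ retains $(X_2,Q_2)$. Both also know $\key$. Keeping these copies does not alter what is sent to $\user_2$, so CP1 is not violated.

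The key steps are as follows.
\begin{enumerate}
\item Observe that $g_t=\enc_{\serv_t}(Q_t,\key)$ is a deterministic function of $(Q_t,\key)$. Hence the response $g_t(X_t)$ is a deterministic function of $(X_t,Q_t,\key)$.
\item After communicating under CP2, the servers jointly hold $(X_1,X_2,Q_1,Q_2,\key)$. They can therefore compute $Y=\dec(g_1(X_1),g_2(X_2))$ exactly as $\user_2$ does.
\item By correctness, $Y=M$ with probability one, so $H(M\mid X_1,X_2,Q_1,Q_2,\key)=0$.
\item Since $M$ is uniform on $\mathbb{F}_2^{2n}$, this gives
\begin{equation*}
I(X_1,X_2,Q_1,Q_2,\key;M)=H(M)=2n>0 ,
\end{equation*}
which contradicts the secrecy condition of Definition \ref{DF3}.
\end{enumerate}
If $\dec$ or the server encoders used private randomness, we would absorb it into the classical record; the servers may likewise copy their own randomness. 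Correctness with zero error still forces the outcome to be a function of the responses.

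The main point to handle carefully is that the argument relies essentially on the servers' ability to copy classical data. That is exactly what fails in the quantum case of Theorem \ref{thm: secrecy under post-attack model for quantum}, where no-cloning combined with the specious condition forces each server's memory into a state independent of $M$. I would also check that retaining copies counts as following the protocol under CP1. Because the transmitted values are unchanged, the users cannot detect it, which matches the intent of the post-attack model.
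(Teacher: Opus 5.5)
Your proposal is correct and follows the paper's own argument: the servers pool $(X_1,X_2,Q_1,Q_2,\key)$, recompute $g_1(X_1)$ and $g_2(X_2)$, and apply $\dec$ to recover $M$; as you note, the paper uses correctness implicitly here. One minor slip: in Protocol \ref{Protocol3} the message lies in $B=\mathbb{F}_2^n$, so $H(M)=n$ rather than $2n$, but this does not affect the conclusion $I(X_1,X_2,Q_1,Q_2,\key;M)>0$.
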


\begin{proof}
For $t = 0,1$, $\serv_t$ obtains $g_t(X_t)$ using $X_t$ and
the server encoder $\enc_{\serv_t}(Q_t, \key)$.
Since the servers are allowed to communicate after the protocol is completed in the post-attack model, they can determine the message $M = \dec(g_1(X_1), g_2(X_2))$ using the decoder $\dec$.
Thus, no protocol $\Phi_{n,c}$ satisfies the message secrecy under the post-attack model.
\end{proof}

In the classical case, we do not use the term ``specious" because Definition \ref{DF3} prohibits the servers from deviating from the prescribed procedure.
As demonstrated in Lemma \ref{lem: secrecy under the post-attack model for classical}, even when the servers strictly follow the protocol, they can still succeed 
the post-attack.
Thus, if the servers are allowed to communicate with each other after the protocol's completion, they do not need to employ a specious attack to extract information.
This is why the concept of a post-specious-attack model is not introduced for classical protocols.

Theorem \ref{thm: secrecy under post-attack model for quantum} and Lemma \ref{lem: secrecy under the post-attack model for classical} collectively highlight the superiority of the quantum secure blind decryption protocol over its classical counterpart under the post-attack model.

\section{Conclusion}\label{S7}
We have proposed two types of new protocols for quantum secure blind decryption and constructed codes to realize these protocols.
This paper presents three main results:
The first and second results are the proposals of protocols for the two newly introduced quantum tasks,
quantum secure blind decryption without/with key index secrecy.
The first protocol ensures both user and key secrecy.
The second protocol extends this security by incorporating the key index secrecy.
We have formally defined these protocols and their associated secrecy requirements and have constructed concrete codes to achieve them.
As the third result, we have demonstrated that our second protocol satisfies secrecy against post-specious attacks, whereas its classical counterpart does not.
%Since post-specious attacks are most possible attacks,
%this advantage shows the usefulness of our proposed protocol for 
%two-server protocol.

We note that both codes for both protocols achieve the optimal transmission rate, as follows:
The upper bound of the transmission rate for the first protocol is derived by considering the entanglement-assisted channel \cite{BSST1,BSST2,Holevo}.
For the second protocol, the upper bound is determined by analyzing the channel between the two servers and $\user_2$.
In both cases, the transmission rates of our codes match these upper bounds.

%{\it Comparison with Symmetric PIR:}
Finally, we explain the relation with 
quantum symmetric private information retrieval (QSPIR), which is a similar task to quantum two-server protocol.
Here, we highlight the differences between the two tasks.
Private information retrieval (PIR) \cite{CGKS98, SJ17} allows a user to retrieve a file from a database stored by servers, such that the server learns no information about the file index. SPIR extends PIR by ensuring that the user gains no information about other files. Quantum extensions of SPIR, such as quantum symmetric PIR (QSPIR), have been studied in \cite{KdW04, Ole11, SH21}.

While the second setting, quantum two-server protocol
shares similarities with QSPIR, simply applying QSPIR to 
the transmission of the key used for the encryption
is insufficient due to the following reason.
In this case, $\user_2$ will obtain the key used for the encryption.
However, quantum two-server protocol
requires the secrecy of the key used for the encryption.
Hence, this application of QSPIR does not work for
quantum secure blind decryption with the index secrecy.

There are several future studies.

\end{document}